\title{Alignment Elimination from Adams' Grammars}
\author[1]{Härmel Nestra}
\affil[1]{Institute of Computer Science, University of Tartu, J. Liivi 2,
50409 Tartu, Estonia\\
  \texttt{harmel.nestra@ut.ee}}
\authorrunning{H. Nestra}
\subjclass{
D.3.1 Formal Definitions and Theory; 
D.3.4 Processors;
F.4.2 Grammars and Other Rewriting Systems
}
\keywords{Parsing expression grammars, indentation, 
grammar transformation}
\begin{document}

\maketitle

\begin{abstract}
Adams' extension of parsing expression grammars enables specifying 
indentation sensitivity using two non-standard grammar constructs ---
indentation by a binary relation and alignment. This paper proposes a
step-by-step transformation of well-formed Adams' grammars for 
elimination of the alignment construct from the grammar. The idea that 
alignment could be avoided was suggested by Adams but no 
process for achieving this aim has been described before. 
\end{abstract}

\section{Introduction}\label{intro}

Parsing expression grammars (PEG) introduced by Ford
\cite{DBLP:conf/popl/Ford04} serve as a modern framework for specifying the
syntax of programming languages and are an alternative to the classic
context-free grammars (CFG).  The core difference between CFG and PEG is
that descriptions in CFG can be ambiguous while PEGs are inherently
deterministic.  A syntax specification written in PEG can in principle be
interpreted as a top-down parser for that syntax; in the case of left
recursion, this treatment is not straightforward but doable (see, e.g.,
\cite{DBLP:conf/sblp/MedeirosMI12}).

Formally, a PEG is a quadruple $\pgg=(\ntsn,\tmst,\pefd,\pes)$ where:
\begin{itemize}
\item $\ntsn$ is a finite set of \term{non-terminals};
\item $\tmst$ is a finite set of \term{terminals};
\item $\pefd$ 
is a function mapping each non-terminal to its replacement
(corresponding to the set of productions of CFG);
\item $\pes$ 
is the \term{start expression} (corresponding to the start symbol of CFG).
\end{itemize}
So $\pefd:\ntsn\to\Expr{\pgg}$ and $\pes\in\Expr{\pgg}$, where 
the set $\Expr{\pgg}$ of all \term{parsing expressions} writable in $\pgg$ is 
defined inductively as follows: 
\begin{enumerate}
\item $\e\in\Expr{\pgg}$ (the empty string);
\item $\tma\in\Expr{\pgg}$ for every $\tma\in\tmst$ (the terminals);
\item $\ntx\in\Expr{\pgg}$ for every $\ntx\in\ntsn$ (the non-terminals);
\item $\pep\peq\in\Expr{\pgg}$ whenever $\pep\in\Expr{\pgg}$,
$\peq\in\Expr{\pgg}$ (\term{concatenation})
\item $\pep\alt\peq\in\Expr{\pgg}$ whenever $\pep\in\Expr{\pgg}$, 
$\peq\in\Expr{\pgg}$ (\term{choice});
\item $\lkh{\pep}\in\Expr{\pgg}$ whenever $\pep\in\Expr{\pgg}$
(\term{negation}, or \term{lookahead});
\item $\rep{\pep}\in\Expr{\pgg}$ whenever $\pep\in\Expr{\pgg}$ 
(\term{repetition}).
\end{enumerate}

All constructs of PEG except for negation are direct analogues of constructs
of the EBNF form of CFG, but their semantics is always deterministic. So 
$\rep{\pep}$ repeats parsing of $\pep$ until failure, 
and $\pep\alt\peq$ always tries to parse $\pep$ first, 
$\peq$ is parsed only if $\pep$ fails. For example, the expression 
$\tma\tmb\alt\tma$ consumes the input string \texttt{ab} entirely while 
$\tma\alt\tma\tmb$ only consumes its first character. 
The corresponding EBNF expressions $\tma\tmb\mid\tma$ and $\tma\mid\tma\tmb$ 
are equivalent, both can match either $\tma$ or $\tma\tmb$ from the input
string. Negation $\lkh{\pep}$ tries to parse $\pep$ and fails if $\pep$
succeeds; if $\pep$ fails then $\lkh{\pep}$ succeeds with consuming no
input. Other constructs of EBNF like non-null repetition $\pep^+$ and
optional occurrence $[\pep]$ can be introduced to PEG as syntactic sugar.

Languages like Python and Haskell allow the syntactic structure of 
programs to be shown by indentation and alignment, instead of the more 
conventional braces and semicolons. Handling indentation and
alignment in Python has been specified in terms of extra tokens INDENT and
DEDENT that mark increasing and decreasing of indentation and 
must be generated by the lexer. In Haskell,
rules for handling indentation and alignment are more sophisticated. 
Both these languages enable to locally use a different layout mode 
where indentation does not matter, which additionally complicates the task
of formal syntax specification. 
Adams and A\uu{g}acan \cite{DBLP:conf/haskell/AdamsA14} proposed an
extension of PEG notation for specifying indentation sensitivity and 
argued that it considerably simplifies this task for Python, Haskell and
many other indentation-sensitive languages.

In this extension, expression $\ind{>}{\pep}$, for example, denotes parsing 
of $\pep$ while assuming a greater indentation than that  of the surrounding 
block. In 
general, parsing expressions may be equipped with binary relations (as was 
$>$ in the example) that must hold between the baselines of the local and the 
current indentation block. In addition, $\aln{\pep}$ denotes parsing of 
$\pep$ while assuming the first token of the input being aligned, i.e., 
positioned on the current indentation baseline. For example, 
the do expressions in Haskell can be specified by
\[
\begin{array}{lcl}
\nter{doexp}&::=&\ind{>}{\code{\kw{do}}}\csp(\nter{istmts}\alt\nter{stmts})\\
\nter{istmts}&::=&\ind{>}{(\aln{\nter{stmt}}^+)}\\
\nter{stmts}&::=&\ind{>}{\code{\kw{\{}}}\ind{\rlxd}{(\nter{stmt}(\code{\kw{;}}\nter{stmt})^*[\code{\kw{;}}]\code{\kw{\}}})}
\end{array}
\] 
Here, $\nter{istmts}$ and $\nter{stmts}$ stand for statement lists in the
indentation and relaxed mode, respectively. In the indentation mode,
a statement list is indented (marked by $>$ in the second production) and all
statements in it are aligned (marked by $\aln{\cdot}$). In the relaxed mode, 
however, relation $\rlxd$ is used to indicate that the indentation baseline
of the contents can be anything. (Technically, $\rlxd$ is the binary relation
containing all pairs of natural numbers.) Terminals $\code{\kw{do}}$ and
$\code{\kw{\{}}$ are also equipped with $>$ to meet the Haskell requirement 
that subsequent tokens of aligned blocks must be indented more than the first 
token. 

Alignment construct provides fulcra for disambiguating the often large 
variety of indentation baseline candidates. 
Besides simplicity of this grammar extension and its use, a strength of it 
lies in the fact that grammars can still serve as parsers. 

The rest of the paper is organized as follows. Section~\ref{sem} formally 
introduces additional constructs of PEG for specifying code layout, 
defines their semantics and studies their
semantic properties. In Sect.~\ref{elim}, a semantics-preserving 
process of eliminating the alignment construct from grammars is described. 
Section ~\ref{other} refers to related work and
Sect.~\ref{conc} concludes.

\section{Indentation extension of PEG}\label{sem}

Adams and A\uu{g}acan \cite{DBLP:conf/haskell/AdamsA14} extend PEGs
with the indentation and alignment constructs. We propose a
slightly different extension with three rather than two extra constructs. 
Our approach agrees with that implemented by Adams in his
\texttt{indentation} package for Haskell~\cite{implementation}, whence
calling the grammars in our approach \term{Adams' grammars} is
justified. All differences between the definitions in this paper and 
in \cite{DBLP:conf/haskell/AdamsA14} are listed and 
discussed in Subsect.~\ref{diff}.

Let $\Nat$ denote the set of all natural numbers, and let 
$\Bool=\set{\tru,\fls}$ (the Boolean domain). 
Denote by $\pws(X)$ the set of all subsets of set $X$, and let $\Rel(X)$ 
denote the set of all binary relations on set $X$, i.e., 
$\Rel(X)=\pws(X\times X)$. 
Standard examples are $>\in\Rel(\Nat)$ (consisting of all pairs $(n,m)$ of
natural numbers such that $n>m$) and $\idrel\in\Rel(\Nat)$ (the identity
relation consisting of all pairs of equal natural numbers); the indentation
extension also makes use of 
$\rlxd\in\Rel(\Nat)$ (the relation containing all pairs of natural
numbers). Whenever $\rlr\in\Rel(X)$ and $Y\subseteq X$, denote
$\rlr(Y)=\set{x\in X:\one{y\in Y}{(y,x)\in\rlr}}$ (the image of $Y$
under relation~$\rlr$). 
The inverse relation of $\rlr$ is defined by 
$\rlr^{-1}=\set{(x,y):(y,x)\in\rlr}$, and the composition of relations
$\rls$ and $\rlr$ by 
$\rls\circ\rlr=\set{(x,z):\one{y}{(x,y)\in\rls\wedge(y,z)\in\rlr}}$.
Finally, denote
$\Rel^+(X)=\set{\rlr\in\Rel(X):\all{x\in
X}{\rlr^{-1}(\set{x})\ne\0}}=\set{\rlr\in\Rel(X):\rlr(X)=X}$.

\subsection{Adams' grammars}\label{sem:adams}

Extend the definition of $\Expr{\pgg}$ given in Sect.~\ref{intro} 
with the following three additional clauses: 

\begin{enumerate}\setcounter{enumi}{7}
\item $\ind{\rlr}{\pep}\in\Expr{\pgg}$ for every $\pep\in\Expr{\pgg}$ and 
$\rlr\in\Rel(\Nat)$ (\term{indentation});
\item $\loc{\rls}{\pep}\in\Expr{\pgg}$ for every $\pep\in\Expr{\pgg}$ and
$\rls\in\Rel(\Nat)$ (\term{token position});
\item $\aln{\pep}\in\Expr{\pgg}$ for every $\pep\in\Expr{\pgg}$ 
(\term{alignment}).
\end{enumerate}

Parsing of an expression $\ind{\rlr}{\pep}$ means parsing of $\pep$ while 
assuming that the part of the input string corresponding to $\pep$ forms 
a new indentation block whose baseline is in relation $\rlr$ to the baseline 
of the surrounding block. (Baselines are identified with column numbers.)
The position construct $\loc{\rls}{\pep}$, missing in 
\cite{DBLP:conf/haskell/AdamsA14}, determines how tokens of the input 
can be situated \wrt the current indentation baseline. 
Finally, parsing an expression $\aln{\pep}$ means parsing of~$\pep$ while 
assuming the first token of the input being positioned on the current 
indentation baseline 
(unlike the position operator, this construct
does not affect processing the subsequent tokens).

Inspired by the \code{indentation} 
package~\cite{implementation}, we call the relations that determine
token positioning \wrt the indentation baseline \term{token modes}.
In the token mode $>$ for example, tokens may appear only to the right of
the indentation baseline.
Applying the position operator with relation~$>$ to parts of 
Haskell grammar to be parsed in the indentation mode avoids 
indenting every single terminal in the
example in Sect.~\ref{intro}. Also, 
indenting terminals with~$>$
is inadequate for do expressions occurring inside a block of relaxed
mode but the position construct can be easily used to change the token mode 
for such blocks (e.g., to $\ge$).

We call a PEG extended with these three constructs a \npeg. Recall from
Sect.~\ref{intro} that $\ntsn$ and $\tmst$ denote the set of non-terminal
and terminal symbols of the grammar, respectively, and
$\pefd:\ntsn\to\Expr{\pgg}$ is the production function.
Concerning the semantics of \npeg, each expression
parses an input string of terminals ($w\in\tmst^*$) in the context of a current 
set of indentation baseline candidates ($I\in\pws(\Nat)$) and a current 
alignment flag indicating 
whether the next terminal should be aligned or not ($b\in\Bool$), assuming a 
certain token mode
($\rlt\in\Rel(\Nat)$). Parsing may succeed, fail, or diverge. If parsing
succeeds, it returns as a result a new triple containing the rest of the 
input $w'$, a new set $I'$ of baseline candidates updated according to the 
information gathered during parsing, 
and a new alignment 
flag $b'$. This result is denoted by $\top(w',I',b')$. If parsing fails, 
there is no result in a triple form; failure is denoted by $\bot$. 

Triples of the form $(w,I,b)\in\tmst^*\times\pws(\Nat)\times\Bool$ 
are behaving as \term{operation states} of parsing, as each parsing step 
may use these data and update them. We will write 
$\State=\tmst^*\times\pws(\Nat)\times\Bool$ (as we never deal with different
terminal sets, dependence on $\tmst$ is not explicitly marked), and denote by 
$\State+1$ the set of possible results of parsing, i.e., 
$\set{\top(\sts):\sts\in\State}\un\set{\bot}$.

The assertion that parsing expression $\pee$ in grammar $\pgg$ with input 
string $w$ in the context of $I$ and $b$ assuming token mode $\rlt$ 
results in 
$o\in\State+1$ is denoted by $\parse{\pgg}{\pee}{\rlt}{(w,I,b)}{o}$. 
The formal definition below must be interpreted inductively, i.e., 
an assertion of the form $\parse{\pee}{\pgg}{\rlt}{\sts}{\steo}$ is valid iff 
it has a finite derivation by the following ten rules: 

\begin{enumerate}
\item $\parse{\pgg}{\e}{\rlt}{\sts}{\top(\sts)}$.
\item For every $\tma\in\tmst$, $\parse{\pgg}{\tma}{\rlt}{(w,I,b)}{\steo}$ 
holds in two cases:
\begin{itemize}
\item If $\steo=\top(w',I',\fls)$ for $w'$, $I'$, $i$ such that
$w=\tma^iw'$ ($\tma^i$ denotes
$\tma$ occurring at column~$i$) and either $b=\fls$ and 
$i\in\rlt^{-1}(I)$, $I'=I\nu\rlt(\set{i})$, or $b=\tru$ and $i\in I$, 
$I'=\set{i}$;
\item If $\steo=\bot$, and there are no $w'$ and $i$ such that $w=\tma^iw'$ 
with either $b=\fls$ and $i\in\rlt^{-1}(I)$ or $b=\tru$ and $i\in I$.
\end{itemize}
\item For every $\ntx\in\ntsn$, $\parse{\pgg}{\ntx}{\rlt}{\sts}{\steo}$ holds 
if $\parse{\pgg}{\pefd(\ntx)}{\rlt}{\sts}{\steo}$ holds.
\item For every $\pep,\peq\in\Expr{\pgg}$, 
$\parse{\pgg}{\pep\peq}{\rlt}{\sts}{\steo}$ holds in two cases:
\begin{itemize}
\item If there exists a triple $\sts'$ such that 
$\parse{\pgg}{\pep}{\rlt}{\sts}{\top(\sts')}$ and 
$\parse{\pgg}{\peq}{\rlt}{\sts'}{\steo}$;
\item If $\parse{\pgg}{\pep}{\rlt}{\sts}{\bot}$ and $\steo=\bot$.
\end{itemize}
\item For every $\pep,\peq\in\Expr{\pgg}$, 
$\parse{\pgg}{\pep\alt\peq}{\rlt}{\sts}{\steo}$ holds in two cases:
\begin{itemize}
\item If there exists a triple $\sts'$ such that 
$\parse{\pgg}{\pep}{\rlt}{\sts}{\top(\sts')}$ and $\steo=\top(\sts')$;
\item If $\parse{\pgg}{\pep}{\rlt}{\sts}{\bot}$ and 
$\parse{\pgg}{\peq}{\rlt}{\sts}{\steo}$.
\end{itemize}
\item For every $\pep\in\Expr{\pgg}$,
$\parse{\pgg}{\lkh{\pep}}{\rlt}{\sts}{\steo}$ holds in two cases:
\begin{itemize}
\item If $\parse{\pgg}{\pep}{\rlt}{\sts}{\bot}$ and $\steo=\top(\sts)$;
\item If there exists a triple $\sts'$ such that 
$\parse{\pgg}{\pep}{\rlt}{\sts}{\top(\sts')}$ and $\steo=\bot$.
\end{itemize}
\item For every $\pep\in\Expr{\pgg}$,
$\parse{\pgg}{\rep{\pep}}{\rlt}{\sts}{\steo}$ holds in two cases:
\begin{itemize}
\item If $\parse{\pgg}{\pep}{\rlt}{\sts}{\bot}$ and $\steo=\top(\sts)$;
\item If there exists a triple $\sts'$ such that
$\parse{\pgg}{\pep}{\rlt}{\sts}{\top(\sts')}$ and
$\parse{\pgg}{\rep{\pep}}{\rlt}{\sts'}{\steo}$.
\end{itemize}
\item For every $\pep\in\Expr{\pgg}$ and $\rlr\in\Rel(\Nat)$,
$\parse{\pgg}{\ind{\rlr}{\pep}}{\rlt}{(w,I,b)}{\steo}$ holds in two cases:
\begin{itemize}
\item If there exists a triple $(w',I',b')$ such that
$\parse{\pgg}{\pep}{\rlt}{(w,\rlr^{-1}(I),b)}{\top(w',I',b')}$ and
$\steo=\top(w',I\nu\rlr(I'),b')$;
\item If $\parse{\pgg}{\pep}{\rlt}{(w,\rlr^{-1}(I),b)}{\bot}$ and
$\steo=\bot$.
\end{itemize}
\item For every $\pep\in\Expr{\pgg}$ and $\rls\in\Rel(\Nat)$,
$\parse{\pgg}{\loc{\rls}{\pep}}{\rlt}{\sts}{\steo}$ holds if 
$\parse{\pgg}{\pep}{\rls}{\sts}{\steo}$ holds. 
\item For every $\pep\in\Expr{\pgg}$,
$\parse{\pgg}{\aln{\pep}}{\rlt}{(w,I,b)}{\steo}$ holds in two cases:
\begin{itemize}
\item If there exists a triple $(w',I',b')$ such that
$\parse{\pgg}{\pep}{\rlt}{(w,I,\tru)}{\top(w',I',b')}$ and
$\steo=\top(w',I',b\wedge b')$;
\item If $\parse{\pgg}{\pep}{\rlt}{(w,I,\tru)}{\bot}$ and $\steo=\bot$.
\end{itemize}
\end{enumerate}
The idea behind the conditions $i\in\rlt^{-1}(I)$ and $i\in I$ occurring in 
clause 2 is that any column~$i$ where a token may appear is in 
relation~$\rlt$ with the current indentation baseline (known to be in $I$) 
if no alignment flag is set, and coincide with the indentation 
baseline otherwise. For the same reason, consuming a token in column $i$ 
restricts the set of allowed indentations to $\rlt(\set{i})$ or $\set{i}$ 
depending on the alignment flag.
In both cases, the alignment flag is set to $\fls$. 
In clause 8 for $\ind{\rlr}{\pep}$, the set $I$ of allowed indentation is 
replaced by 
$\rlr^{-1}(I)$ as the local indentation baseline must be in relation $\rlr$
with the current indentation baseline known to be in $I$. After successful
parsing of $\pep$ with the resulting set of allowed local
indentations being $I'$, the set of allowed indentations of the surrounding block 
is restricted to $\rlr(I')$. 
Clause 10 similarly 
operates on the alignment flag.

For a toy example, consider parsing of $\ind{>}{\aln{\tma\tmb}}$ with the
operation state $(\tma^2\tmb^3,\Nat,\fls)$ assuming the token mode $\ge$. 
For that, we must parse 
$\aln{\tma\tmb}$ with $(\tma^2\tmb^3,\Nat\sm\set{0},\fls)$ by clause 8 since
$>^{-1}(\Nat)=\Nat\sm\set{0}$. For that in turn, we must parse $\tma\tmb$ 
with $(\tma^2\tmb^3,\Nat\sm\set{0},\tru)$ by clause 10. By clause 
2, we have
$\parse{\pgg}{\tma}{\ge\,}{\!(\tma^2\tmb^3,\Nat\sm\set{0},\tru)}{\top(\tmb^3,\set{2},\fls)}$ 
(as $2\in\Nat\sm\set{0}$)
and $\parse{\pgg}{\tmb}{\ge\,}{\!(\tmb^3,\set{2},\fls)}{\top(\e,\set{2},\fls)}$ 
(as $(2,3)\in\,\ge^{-1}$). Therefore, by clause 4, 
$\parse{\pgg}{\tma\tmb}{\ge\,}{(\tma^2\tmb^3,\Nat\sm\set{0},\tru)}{\top(\e,\set{2},\fls)}$.
Finally, 
$\parse{\pgg}{\aln{\tma\tmb}}{\ge\,}{(\tma^2\tmb^3,\Nat\sm\set{0},\fls)}{\top(\e,\set{2},\fls)}$ 
and
$\parse{\pgg}{\ind{>}{\aln{\tma\tmb}}}{\ge\,}{(\tma^2\tmb^3,\Nat,\fls)}{\top(\e,\set{0,1},\fls)}$
by clauses 10 and 8. The set $\set{0,1}$ in the final state shows that only
$0$ and $1$ are still candidates for the indentation baseline outside the
parsed part of the input (before parsing, the candidate set was the whole $\Nat$).

Note that this definition involves circular dependencies. For instance, if
$\pefd(\ntx)=\ntx$ for some $\ntx\in\ntsn$ then 
$\parse{\pgg}{\ntx}{\rlt}{\sts}{\steo}$ if 
$\parse{\pgg}{\ntx}{\rlt}{\sts}{\steo}$ by clause 3. 
There is no result of parsing in such cases (not
even $\bot$). We call this behaviour \term{divergence}.

\subsection{Properties of the semantics}\label{sem:prop}

Ford \cite{DBLP:conf/popl/Ford04} proves that parsing in PEG is unambiguous, 
whereby the consumed part of an input string always is its prefix.
Theorem~\ref{sem:uniq} below is an analogous result for \npeg. Besides 
the uniqueness of the result of parsing, it states that 
if we only consider relations in $\Rel^+(\Nat)$ then the whole operation 
state in our setting is in a certain sense decreasing during parsing. 

Denote by $\ge$ the \term{suffix order} of strings (i.e., $w\ge w'$ iff
$w=uw'$ for some $u\in\tmst^*$) and by $\sqsupseteq$ the 
\term{implication order} of truth values
(i.e., $\tru\sqsupset\fls$). Denote by $\geqslant$ the pointwise order 
on operation states, i.e., $(w,I,b)\geqslant(w',I',b')$ iff $w\ge w'$,
$I\supseteq I'$ and $b\sqsupseteq b'$. 

\begin{theorem}\label{sem:uniq}
Let $\pgg=(\ntsn,\tmst,\pefd,\pes)$ be a \npeg, $\pee\in\Expr{\pgg}$, 
$\rlt\in\Rel^+(\Nat)$ and
$\sts\in\State$. Then 
$\parse{\pgg}{\pee}{\rlt}{\sts}{\steo}$ for at most one $\steo$, whereby 
$\steo=\top(\sts')$ implies $\sts\geqslant\sts'$. Also if
$\sts=(w,I,b)$ and $\sts'=(w',I',b')$ then $\sts\ne\sts'$ implies both 
$w>w'$ and $b'=\fls$, and $I\ne\0$ implies $I'\ne\0$.
\end{theorem}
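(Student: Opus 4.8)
The plan is to establish all of the claims simultaneously, by induction on the height of a derivation of $\parse{\pgg}{\pee}{\rlt}{\sts}{\steo}$ under the ten rules. The invariant carried along the induction is: for every such derivation, (i) if $\parse{\pgg}{\pee}{\rlt}{\sts}{\steo'}$ is also derivable then $\steo'=\steo$; and (ii) if $\steo=\top(\sts')$ then $\sts\geqslant\sts'$, and, writing $\sts=(w,I,b)$ and $\sts'=(w',I',b')$, also $\sts\ne\sts'\Rightarrow(w>w'\wedge b'=\fls)$ and $I\ne\0\Rightarrow I'\ne\0$. Since the suffix order $\ge$, the implication order $\sqsupseteq$ and hence $\geqslant$ are partial orders, in (ii) the inequality $\sts\geqslant\sts'$ will be the routine part and it is the strict behaviour that needs attention. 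The induction step is a case analysis on the shape of $\pee$, which fixes the last rule of the derivation up to the binary choice inside that rule. I would also use, for the clauses $\ind{\rlr}{\pep}$ and $\loc{\rls}{\pep}$, that the relations $\rlr$ resp.\ $\rls$ occurring there lie in $\Rel^+(\Nat)$; this (together with $\rlt\in\Rel^+(\Nat)$) is exactly what the decreasing part of (ii) rests on, so I read it as part of the hypotheses.

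For (i): in each rule the two alternatives are mutually exclusive, because one of them asks a designated subexpression to succeed and the other asks it to fail, and the induction hypothesis applied to the (strictly shorter) subderivation already determines the outcome of parsing that subexpression; this simultaneously kills the wrong alternative in any competing derivation, which is why the induction runs over the height of a single derivation rather than over both. In the four rules carrying an existentially quantified intermediate triple ($\pep\peq$, $\rep{\pep}$, $\ind{\rlr}{\pep}$, $\aln{\pep}$) the first subderivation pins that triple down by the induction hypothesis, after which the continuation is determined, again by the induction hypothesis. The self-reference in the rule for $\rep{\pep}$ is harmless: the recursive premise is a proper, hence strictly shorter, subderivation, even when $\pep$ succeeds without consuming input (if $\pep$ can do so indefinitely there is simply no finite derivation, i.e.\ divergence, and nothing to prove).

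For (ii) in the easy rules: $\e$, the successful case of $\lkh{\pep}$, and the base case of $\rep{\pep}$ return the state unchanged; for a terminal $\tma^i$ the string strictly shrinks and $b'=\fls$, while $I'$ (either $\set{i}$, or $I$ cut to $\rlt(\set{i})$) is nonempty — trivially in the first case, and in the second because $i\in\rlt^{-1}(I)$ yields an element shared by $I$ and $\rlt(\set{i})$. For $\ntx$, for $\pep\peq$, and for the recursive case of $\rep{\pep}$ one chains the induction hypotheses through transitivity of $\geqslant$; the strict part I would argue by contrapositive — the induction hypothesis gives, at each step, that an unchanged string forces an unchanged triple, so along the chain any change at all entails a string decrease, hence $w>w'$, and the flag, once forced to $\fls$ at the first step that shortens the string, stays $\fls$ because every subsequent step leaves the triple fixed. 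For $\aln{\pep}$ the inner parse runs with flag $\tru$; its result has $I$-component inside $I$ and returned flag weakened to $b\wedge b'$, which is $\sqsupseteq$-below $b$, and if no input is consumed the inner triple equals its result, giving $b'=\tru$ and $I'=I$. The clause $\loc{\rls}{\pep}$ reduces literally to parsing $\pep$ in token mode $\rls$, so the induction hypothesis transfers verbatim (using $\rls\in\Rel^+(\Nat)$).

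The one genuinely delicate case — and the step I expect to be the main obstacle — is $\ind{\rlr}{\pep}$, because the inner parse runs from $(w,\rlr^{-1}(I),b)$ rather than from $(w,I,b)$ and reports its result $(w',I',b')$ as $(w',I'',b')$ with $I''$ the intersection of $I$ with $\rlr(I')$, so $\geqslant$ does not propagate in one line. Monotonicity is nonetheless immediate since $I''\subseteq I$. For the strict claim: if $w=w'$ then the contrapositive of the induction hypothesis makes the inner triple equal its result, so $I'=\rlr^{-1}(I)$ and $b'=b$; here $\rlr\in\Rel^+(\Nat)$ (surjectivity of $\rlr$ as a relation) gives $\rlr(\rlr^{-1}(I))\supseteq I$, whence $I''=I$ and the state is genuinely unchanged — without surjectivity, indentation by a relation omitting some column could shrink $I$, even to $\0$, with no input consumed, and the claim would be false. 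For non-emptiness: $I\ne\0$ gives $\rlr^{-1}(I)\ne\0$ by surjectivity, hence $I'\ne\0$ by the induction hypothesis, and any $x\in I'\subseteq\rlr^{-1}(I)$ produces some $i\in I$ with $(x,i)\in\rlr$, so $i\in I''$. So I would isolate at the start the two facts about $\rlr\in\Rel^+(\Nat)$ used here — $\rlr^{-1}(I)\ne\0$ whenever $I\ne\0$, and $\rlr(\rlr^{-1}(I))\supseteq I$ — and likewise for the token modes introduced by $\loc{\rls}{\cdot}$; verifying precisely these interactions between the $\rlr^{-1}(\cdot)$ / $\rlr(\cdot)$ bookkeeping of the indentation rule and the order $\geqslant$ is where the real work of the proof lies.
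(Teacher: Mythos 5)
Your proposal is correct and takes essentially the same route as the paper, which proves the theorem by induction on the shape (equivalently, height) of the derivation of $\parse{\pgg}{\pee}{\rlt}{\sts}{\steo}$ with a case analysis over the parsing rules; your treatment of the delicate indentation case via $\rlr^{-1}(I)\ne\0$ and $\rlr(\rlr^{-1}(I))\supseteq I$ for $\rlr\in\Rel^+(\Nat)$ is exactly the needed content. Your explicit reading that the relations occurring in indentation and position operators of $\pee$ (and of the grammar) must also lie in $\Rel^+(\Nat)$ agrees with the paper's stated intent preceding the theorem and is indeed required for the decreasing claims.
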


\begin{proof}
By induction on the shape of the derivation tree of the assertion
$\parse{\pgg}{\pee}{\rlt}{\sts}{\steo}$.
\end{proof}

Theorem~\ref{sem:uniq} enables to observe a common pattern in the semantics 
of indentation and alignment. Denoting by $\pefk(\pep)$ either 
$\ind{\rlr}{\pep}$ or $\aln{\pep}$, both clauses 8 and 10 have the following 
form, parametrized on two mappings $\a,\g:\State\to\State$:

\begin{enumerate}
\item[]
For $\pep\in\Expr{\pgg}$,
$\parse{\pgg}{\pefk(\pep)}{\rlt}{\sts}{\steo}$ holds in two cases:
\begin{itemize}
\item If there exists a state $\sts'$ such that
$\parse{\pgg}{\pep}{\rlt}{\a(\sts)}{\top(\sts')}$ and
$\steo=\top(\sts\wedge\g(\sts'))$;
\item If $\parse{\pgg}{\pep}{\rlt}{\a(\sts)}{\bot}$ and $\steo=\bot$.
\end{itemize}
\end{enumerate}

The meanings of indentation and alignment constructs are distinguished solely
by $\a$ and $\g$. For many properties, proofs that rely on this 
abstract common definition can be carried out, assuming that 
$\g$ is monotone, preserves the largest element and follows together with 
$\a$ the axiom $x\wedge\g(y)\le\g(\a(x)\wedge y)$.
The class of all meet semilattices $L$ with top element, equipped 
with mappings $\a$, $\g$ satisfying these three conditions, 
contains identities (i.e., semilattices $L$ with $\a=\g=\id_L$) and is closed 
under compositions (of different $\a$, $\g$ defined on the same semilattice
$L$) and under direct products. 
If $\rlr\in\Rel^+(\Nat)$ then the conditions hold for
$\a_1,\g_1:\pws(\Nat)\to\pws(\Nat)$ with $\a_1(I)=\rlr^{-1}(I)$, 
$\g_1(I)=\rlr(I)$, 
similarly in the case if $\a_2,\g_2:\Bool\to\Bool$ with $\a_2(b)=\tru$,
$\g_2(b)=b$. Now the direct product of the identities of $\tmst^*$ and
$\Bool$ with $(\a_1,\g_1)$ on $\pws(\Nat)$ gives the indentation case, 
and the direct 
product of the identities of $\tmst^*$ and $\pws(\Nat)$ and the Boolean 
lattice $\Bool$ with $(\a_2,\g_2)$ gives the alignment case.

If $\a,\g$ satisfy the conditions then $\g(\a(x))\ge x$ since
$x=x\wedge\top=x\wedge\g(\top)\le\g(\a(x)\wedge\top)=\g(\a(x))$. 
Adding dual conditions ($\a$ monotone, $\a(\bot)=\bot$ and
$\a(x)\vee y\ge\a(x\vee\g(y))$) would make $(\a,\g)$ a
Galois' connection. In our cases, the dual axioms do not hold.

\subsection{Semantic equivalence}\label{sem:equiv}

\begin{definition}
Let $\pgg=(\ntsn,\tmst,\pefd,\pes)$ be a \npeg\ and $\pep,\peq\in\Expr{\pgg}$. 
We say that $\pep$ and $\peq$ are \term{semantically equivalent} in $\pgg$ 
and denote $\pep\sim_{\pgg}\peq$ iff 
$\parse{\pgg}{\pep}{\rlt}{\sts}{\steo}\iff\parse{\pgg}{\peq}{\rlt}{\sts}{\steo}$ 
for every $\rlt\in\Rel^+(\Nat)$, 
$\sts\in\State$ and $\steo\in\State+1$.
\end{definition}

For example, one
can easily prove that $\pep\e\sim_{\pgg}\pep\sim_{\pgg}\e\pep$, 
$\pep(\peq\per)\sim_{\pgg}(\pep\peq)\per$, 
$\pep\alt(\peq\alt\per)\sim_{\pgg}(\pep\alt\peq)\alt\per$, 
$\pep(\peq\alt\per)\sim_{\pgg}\pep\peq\alt\pep\per$,
$\pep\alt\peq\sim_{\pgg}\pep\alt\lkh{\pep}\peq$ for all
$\pep,\peq,\per\in\Expr{\pgg}$ \cite{DBLP:conf/popl/Ford04}. 
We are particularly
interested in equivalences involving the additional
operators of \npeg. In Sect.~\ref{elim}, they will be useful in 
eliminating alignment and position operators. The following 
Theorem~\ref{sem:maineq} states distributivity laws of the three 
new operators of \npeg\ \wrt other constructs:

\begin{theorem}\label{sem:maineq}
Let $\pgg=(\ntsn,\tmst,\pefd,\pes)$ be a \npeg. Then:
\begin{enumerate}
\item \label{sem:maineq:loc}
$\loc{\rls}{\e}\sim_{\pgg}\e$,
$\loc{\rls}{(\pep\peq)}\sim_{\pgg}\loc{\rls}{\pep}\loc{\rls}{\peq}$,
$\loc{\rls}{(\pep\alt\peq)}\sim_{\pgg}\loc{\rls}{\pep}\alt\loc{\rls}{\peq}$,
$\loc{\rls}{(\lkh{\pep})}\sim_{\pgg}\lkh{\loc{\rls}{\pep}}$,
$\loc{\rls}{(\rep{\pep})}\sim_{\pgg}\rep{(\loc{\rls}{\pep})}$,
$\loc{\rls}{(\ind{\rlr}{\pep})}\sim_{\pgg}\ind{\rlr}{(\loc{\rls}{\pep})}$,
$\loc{\rls}{\aln{\pep}}\sim_{\pgg}\aln{\loc{\rls}{\pep}}$ for all
$\rls\in\Rel^+(\Nat)$;
\item $\ind{\rlr}{\e}\sim_{\pgg}\e$,
$\ind{\rlr}{(\pep\alt\peq)}\sim_{\pgg}\ind{\rlr}{\pep}\alt\ind{\rlr}{\peq}$,
$\ind{\rlr}{(\lkh{\pep})}\sim_{\pgg}\lkh\ind{\rlr}{\pep}$, 
$\ind{\rlr}{(\loc{\rls}{\pep})}\sim_{\pgg}\loc{\rls}{(\ind{\rlr}{\pep})}$ 
for all $\rlr\in\Rel^+(\Nat)$;
\item $\aln{\e}\sim_{\pgg}\e$,
$\aln{\pep\alt\peq}\sim_{\pgg}\aln{\pep}\alt\aln{\peq}$,
$\aln{\lkh{\pep}}\sim_{\pgg}\lkh\aln{\pep}$, 
$\aln{\loc{\rls}{\pep}}\sim_{\pgg}\loc{\rls}{\aln{\pep}}$.
\end{enumerate}
\end{theorem}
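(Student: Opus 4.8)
The plan is to check each of the listed equivalences directly against the ten-clause inductive definition of the parsing relation. By definition $\pep\sim_{\pgg}\peq$ asserts that for every token mode $\rlt\in\Rel^+(\Nat)$, every operation state $\sts\in\State$ and every $\steo\in\State+1$, parsing $\pep$ yields $\steo$ iff parsing $\peq$ does; so for each pair it suffices to exhibit a conclusion-preserving correspondence between the derivations of the two assertions. This biconditional form already absorbs divergence (either both sides have a result, necessarily the same one, or neither does), and Theorem~\ref{sem:uniq} is used only to keep the bookkeeping light. I would treat the fifteen equivalences in three groups: the seven $\loc{\rls}{\cdot}$ laws of part~1; the $\e$, choice and negation laws for $\ind{\rlr}{\cdot}$ and $\aln{\cdot}$ in parts~2 and~3; and the two laws saying that $\loc{\rls}{\cdot}$ commutes with $\ind{\rlr}{\cdot}$ and with $\aln{\cdot}$, which are listed twice but are one equivalence each, $\sim_{\pgg}$ being symmetric.

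The $\loc{\rls}{\cdot}$ group rests entirely on clause~9: parsing $\loc{\rls}{\pep}$ in token mode $\rlt$ coincides with parsing $\pep$ in token mode $\rls$, so the position operator merely overwrites the ambient token mode. Since only clause~9 changes the token mode --- every other clause propagates it unchanged to its subexpressions, or, for $\e$ and terminals, has no subexpression --- pushing $\loc{\rls}{\cdot}$ inward past $\e$, concatenation, choice, negation, $\ind{\rlr}{\cdot}$ and $\aln{\cdot}$ is a one-step rewriting of derivation trees that leaves all subtrees unchanged. The only law in this group that is not purely structural is $\loc{\rls}{(\rep{\pep})}\sim_{\pgg}\rep{(\loc{\rls}{\pep})}$, because clause~7 is recursive; for it I would induct on the height of the parsing derivation, unrolling $\rep{\pep}$ one step by clause~7 on both sides, rewriting the leading occurrence of $\pep$ by clause~9, and applying the induction hypothesis to the strictly smaller tail, non-termination being handled by the biconditional as above.

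For $\ind{\rlr}{\cdot}$ and $\aln{\cdot}$ I would use the common abstract form of clauses~8 and~10 singled out just after Theorem~\ref{sem:uniq}: writing $\pefk(\pep)$ for either construct and $\a,\g$ for its state maps, $\pefk(\pep)$ succeeds from $\sts$ with $\top(\sts\wedge\g(\sts'))$ exactly when $\pep$ succeeds from $\a(\sts)$ with $\top(\sts')$, and fails from $\sts$ exactly when $\pep$ does from $\a(\sts)$. From this, uniformly in the choice of construct, one gets: $\pefk(\e)\sim_{\pgg}\e$, because $\pefk(\e)$ returns $\top(\sts\wedge\g(\a(\sts)))$, which equals $\top(\sts)$ by the inequality $\g(\a(x))\ge x$ derived in Subsect.~\ref{sem:prop} from the three axioms on $(\a,\g)$; $\pefk(\pep\alt\peq)\sim_{\pgg}\pefk(\pep)\alt\pefk(\peq)$, by a case split on whether $\pep$ succeeds, fails or diverges from $\a(\sts)$ --- success gives $\top(\sts\wedge\g(\sts'))$ on both sides, failure reduces both sides to $\pefk(\peq)$ from $\sts$, divergence stalls both; and $\pefk(\lkh{\pep})\sim_{\pgg}\lkh{\pefk(\pep)}$, whose only nontrivial branch is $\pep$ failing from $\a(\sts)$, where the left side returns $\top(\sts\wedge\g(\a(\sts)))=\top(\sts)$ by $\g(\a(x))\ge x$, matching the right side. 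Instantiating $\pefk$ as $\ind{\rlr}{\cdot}$, with $\a(I)=\rlr^{-1}(I)$, $\g(I)=\rlr(I)$ and identities on the two other components of the state, and as $\aln{\cdot}$, with $\a(b)=\tru$, $\g(b)=b$ on the $\Bool$ component, yields the six laws in parts~2 and~3 concerning $\e$, choice and negation. The remaining two laws, $\loc{\rls}{(\ind{\rlr}{\pep})}\sim_{\pgg}\ind{\rlr}{(\loc{\rls}{\pep})}$ and $\loc{\rls}{\aln{\pep}}\sim_{\pgg}\aln{\loc{\rls}{\pep}}$, are the group-1 argument read the other way round: clauses~8 and~10 rework the operation state without consulting the token mode, hence commute with the mode substitution of clause~9, and both sides come down to running $\pep$ in mode $\rls$ from $\a(\sts)$ and repackaging the outcome $\top(\sts')$ as $\top(\sts\wedge\g(\sts'))$.

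I expect no genuinely hard step. The two points that need real care are the induction for the repetition law, where the biconditional phrasing has to be threaded through so that non-termination on one side forces it on the other, and the appeal to $\g(\a(x))\ge x$ --- equivalently, to the hypothesis $\rlr\in\Rel^+(\Nat)$, i.e.\ $I\subseteq\rlr(\rlr^{-1}(I))$ --- in the $\ind{\rlr}{\e}$ and $\ind{\rlr}{\lkh{\pep}}$ cases, which really do fail without positivity (for instance $\ind{\0}{\e}\not\sim_{\pgg}\e$, since $\ind{\0}{\e}$ sends $(w,I,b)$ to $\top(w,\0,b)$). Everything else is routine bookkeeping: tracking, clause by clause, which of the token mode, the candidate set and the alignment flag is threaded, overwritten, or combined by $\wedge$, so that the derivation correspondence behind each $\sim_{\pgg}$ comes out cleanly.
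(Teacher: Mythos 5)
Your proposal is correct and takes essentially the same route as the paper: claim 1 by noting that only the position clause alters the token mode while every other clause passes it down unchanged, and claims 2 and 3 via the joint $\a$, $\g$ formulation of the indentation and alignment semantics together with its axioms (in particular $\g(\a(x))\geqslant x$, which is where $\Rel^+$ enters). No gaps worth flagging.
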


\begin{proof}
The equivalences in claim 1 hold as
the token mode steadily distributes to each case of the
semantics definition. Those in claims 2 and 3 have straightforward 
proofs using the joint form of the semantics of indentation and alignment 
and the axioms of $\a$, $\g$.
\end{proof}

Note that indentation does not distribute with concatenation, i.e.,
$\ind{\rlr}{(\pep\peq)}\nsim_{\pgg}\ind{\rlr}{\pep}\ind{\rlr}{\peq}$. This is
because $\ind{\rlr}{(\pep\peq)}$ assumes one indentation block with a
baseline common to $\pep$ and $\peq$ while $\ind{\rlr}{\pep}\ind{\rlr}{\peq}$ 
tolerates different baselines for $\pep$ and $\peq$. For example, take 
$\pep=\tma\in\tmst$, $\peq=\tmb\in\tmst$, let the token mode be
$\idrel$ and the input state be $(\tma^1\tmb^2,\Nat,\fls)$ (recall that
$\tma^i$ means terminal $\tma$ occurring in column $i$). We have
$\parse{\pgg}{\tma}{\idrel}{(\tma^1\tmb^2,\Nat\sm\set{0},\fls)}{\top(\tmb^2,\set{1},\fls)}$ 
and $\parse{\pgg}{\tmb}{\idrel}{(\tmb^2,\set{1},\fls)}{\bot}$ (since
$(2,1)\nin\idrel$), therefore
$\parse{\pgg}{\tma\tmb}{\idrel}{(\tma^1\tmb^2,\Nat\sm\set{0},\fls)}{\bot}$ and 
$\parse{\pgg}{\ind{>}{(\tma\tmb)}}{\idrel}{(\tma^1\tmb^2,\Nat,\fls)}{\bot}$. 
On the other hand,
$\parse{\pgg}{\tma}{\idrel}{(\tma^1\tmb^2,\Nat\sm\set{0},\fls)}{\top(\tmb^2,\set{1},\fls)}$
implies
$\parse{\pgg}{\ind{>}{\tma}}{\idrel}{(\tma^1\tmb^2,\Nat,\fls)}{\top(\tmb^2,\set{0},\fls)}$
(since $\Nat\nu(>(\set{1}))=\set{0}$) and, analogously, 
$\parse{\pgg}{\ind{>}{\tmb}}{\idrel}{(\tmb^2,\set{0},\fls)}{\top(\e,\set{0},\fls)}$ 
(since $>^{-1}(\set{0})=\Nat\sm\set{0}\ni 2$ and
$\set{0}\nu(>(\set{2}))=\set{0}$). Consequently, 
$\parse{\pgg}{\ind{>}{\tma}\ind{>}{\tmb}}{\idrel}{(\tma^1\tmb^2,\Nat,\fls)}{\top(\e,\set{0},\fls)}$.

We can however prove the following facts:

\begin{theorem}\label{sem:absorb}
Let $\pgg=(\ntsn,\tmst,\pefd,\pes)$ be a \npeg.
\begin{enumerate}
\item Identity indentation law: For all $\pep\in\Expr{\pgg}$, $\ind{\idrel}{\pep}\sim_{\pgg}\pep$.
\item Composition law of indentations: For all $\pep\in\Expr{\pgg}$ and $\rlr,\rls\in\Rel^+(\Nat)$, 
$\ind{\rls}{(\ind{\rlr}{\pep})}\sim_{\pgg}\ind{\rls\circ\rlr}{\pep}$. 
\item Distributivity of indentation and alignment: For all $\pep\in\Expr{\pgg}$ and $\rlr\in\Rel^+(\Nat)$, 
$\ind{\rlr}{\aln{\pep}}\sim_{\pgg}\aln{\ind{\rlr}{\pep}}$. 
\item Idempotence of alignment: For all $\pep\in\Expr{\pgg}$, $\aln{\aln{\pep}}\sim_{\pgg}\aln{\pep}$. 
\item \label{sem:absorb:loc}
Cancellation of outer token modes: 
For all $\pep\in\Expr{\pgg}$ and $\rls,\rlt\in\Rel(\Nat)$, 
$\loc{\rlt}{(\loc{\rls}{\pep})}\sim_{\pgg}\loc{\rls}{\pep}$. 
\item Terminal alignment property: For all $\tma\in\tmst$,
$\aln{\tma}\sim_{\pgg}\loc{\idrel}{\tma}$. 
\end{enumerate}
\end{theorem}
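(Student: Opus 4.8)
The plan is to treat all six items uniformly as semantic equivalences: for each I would fix an arbitrary token mode in $\Rel^+(\Nat)$, a state $\sts\in\State$ and an outcome $\steo\in\State+1$, and show that the parsing judgement is derivable for the left-hand side exactly when it is derivable for the right-hand side. Unlike the distributivity laws of Theorem~\ref{sem:maineq}, none of these needs induction on the structure of the inner expression: on both sides of each equivalence the outermost operator is one of $\ind{\rlr}{\cdot}$, $\loc{\rls}{\cdot}$, $\aln{\cdot}$ (or, in item~6, a terminal), so I would simply peel off the outermost one or two clauses of the semantics --- clauses~8, 9 and 10, plus clause~2 for item~6 --- on each side and check that what remains, a derivation about the inner expression together with a rearrangement of the resulting state, coincides on the two sides. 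Since this pairs up derivations bijectively, divergence is handled along the way and need not be argued separately. Throughout I write the input state as $(w,I,b)$.

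Item~5 is immediate: clause~9 makes the ambient token mode irrelevant to $\loc{\rls}{\pep}$, so for any $\rlt'\in\Rel^+(\Nat)$ both $\parse{\pgg}{\loc{\rlt}{(\loc{\rls}{\pep})}}{\rlt'}{\sts}{\steo}$ and $\parse{\pgg}{\loc{\rls}{\pep}}{\rlt'}{\sts}{\steo}$ reduce, by iterating clause~9, to $\parse{\pgg}{\pep}{\rls}{\sts}{\steo}$. Items~3 and~4 are almost as easy: unfolding clauses~8 and~10 (in both orders for item~3, twice for item~4) shows that on either side $\pep$ is parsed from the same state --- $(w,\rlr^{-1}(I),\tru)$ for item~3, $(w,I,\tru)$ for item~4 --- and that its result is then passed through the output operations of indentation and of alignment, which act on disjoint coordinates of the state (the $\pws(\Nat)$- and the $\Bool$-coordinate) and hence commute, with $\tru\wedge b'=b'$ absorbing the flag forced by a nested or already-applied $\aln{\cdot}$. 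The same conclusions drop out of the abstract $(\a,\g)$-presentation recalled after Theorem~\ref{sem:uniq}: indentation acts only on the $\pws(\Nat)$ coordinate and alignment only on the $\Bool$ coordinate, so the two boxes commute, and alignment's $\a$ is idempotent.

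Item~1 is the one place where a genuinely external fact is needed. Unfolding clause~8 with $\rlr=\idrel$ and using $\idrel^{-1}(I)=I$ and $\idrel(I')=I'$ shows that $\ind{\idrel}{\pep}$ parses $\pep$ from $(w,I,b)$ and, on a success $\top(w',I',b')$, returns $\top(w',I\nu I',b')$; to rewrite this as $\top(w',I',b')$ I would invoke Theorem~\ref{sem:uniq}, which (as the token mode lies in $\Rel^+(\Nat)$) gives $I\supseteq I'$ and hence $I\nu I'=I'$, the failure branch matching trivially. Item~6 is a short case analysis on the terminal clause: $\aln{\tma}$ parses $\tma$ with the flag forced to~$\tru$, so clause~2 asks only for $i\in I$ with $w=\tma^iw'$ and returns $\top(w',\set{i},\fls)$, after which clause~10 leaves the flag unchanged since $b\wedge\fls=\fls$; whereas $\loc{\idrel}{\tma}$ parses $\tma$ at token mode $\idrel$, where --- for either incoming flag --- the side condition is $i\in\idrel^{-1}(I)=I$ and the $I$-update is $\set{i}$ or $I\nu\idrel(\set{i})=I\nu\set{i}=\set{i}$, again giving $\top(w',\set{i},\fls)$. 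Hence the two sides succeed on the same inputs with the same result and fail on the same inputs.

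The real work, and the step I expect to be the main obstacle, is item~2. The plan is to unfold clause~8 twice on $\ind{\rls}{(\ind{\rlr}{\pep})}$ and once on $\ind{\rls\circ\rlr}{\pep}$ and to check that the two resulting descriptions agree. For the state handed to $\pep$ this is the relational identity that $\rlr^{-1}(\rls^{-1}(I))$ equals the preimage of $I$ under the composite relation, read straight off the definitions of $^{-1}$, $\circ$ and image --- the one pitfall being the order in which $\rlr$ and $\rls$ must be composed. The delicate point is the $I$-component returned on success: on the left it is $I\nu\rls(\rls^{-1}(I)\nu\rlr(I'))$ and on the right $I\nu(\rls\circ\rlr)(I')$, and showing these coincide needs, besides that relational identity, the observation that every witnessing ``middle'' baseline automatically lies in $\rls^{-1}(I)$ once the outcome is intersected with~$I$, so that the inner intersection with $\rls^{-1}(I)$ is harmless; the hypotheses $\rlr,\rls\in\Rel^+(\Nat)$ enter here just as in item~1, to license the monotonicity appeals. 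Once these bookkeeping identities are in place the success and failure branches line up clause for clause, completing the argument.
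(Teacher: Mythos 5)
Your proposal is correct and follows essentially the paper's route: item~1 via Theorem~\ref{sem:uniq} to absorb the intersection with the incoming state, items~5 and~6 by trivial unfolding and a terminal-clause case split, and items~2--4 by exactly the computation the paper packages abstractly — your ``disjoint coordinates'' and idempotence observations for items~3--4 and your ``middle witness lies in $\rls^{-1}(I)$'' identity for item~2 are precisely the concrete instances of the paper's argument with the mappings $\a,\g$ (monotonicity plus the axiom $x\wedge\g(y)\le\g(\a(x)\wedge y)$). The only difference is presentational: the paper proves 2--4 once through the composition lemma for $(\a,\g)$-pairs and then identifies the composed mappings, while you unfold the semantic clauses directly for each item, which works just as well.
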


\begin{proof}
For claim 1, note that an indentation with the identity relation 
$\idrel$ corresponds to $\a$, $\g$ being identity mappings. Hence
\[\renewcommand{\arraystretch}{1.6}
\begin{array}{lcl}
\parse{\pgg}{\ind{\idrel}{\pep}}{\rlt}{\sts}{\steo}
&\iff&\bigor
{\one{\sts'}{\parse{\pgg}{\pep}{\rlt}{\sts}{\top(\sts')}\la\steo=\top(\sts\wedge\sts')}}
{\parse{\pgg}{\pep}{\rlt}{\sts}{\bot}\wedge\steo=\bot}\\
&\iff&\bigor
{\one{\sts'}{\parse{\pgg}{\pep}{\rlt}{\sts}{\top(\sts')}\la\steo=\top(\sts')}}
{\parse{\pgg}{\pep}{\rlt}{\sts}{\bot}\wedge\steo=\bot}\\
&\iff&\parse{\pgg}{\pep}{\rlt}{\sts}{\steo}\mbox{,}
\end{array}
\]
where $\sts\wedge\sts'$ can be replaced with $\sts'$ because
$\sts\geqslant\sts'$ by Theorem~\ref{sem:uniq}.

Concerning claims 2--4, let $\pefk_1,\pefk_2$ be two constructs whose
semantics follow the common pattern of indentation and alignment with
mapping pairs $(\a_1,\g_1)$ and $(\a_2,\g_2)$, respectively. Then
\[\renewcommand{\arraystretch}{1.6}
\begin{array}{lcl}
\multicolumn{3}{@{}l}{\parse{\pgg}{\pefk_2(\pefk_1(\pep))}{\rlt}{\sts}{\steo}}\\
\qquad&\iff&\bigor
{\one{\sts'}{\parse{\pgg}{\pefk_1(\pep)}{\rlt}{\a_2(\sts)}{\top(\sts')}\la\steo=\top(\sts\wedge\g_2(\sts'))}}
{\parse{\pgg}{\pefk_1(\pep)}{\rlt}{\a_2(\sts)}{\bot}\la\steo=\bot}\\
&\iff&\bigor
{\one{\sts''}{\parse{\pgg}{\pep}{\rlt}{\a_1(\a_2(\sts))}{\top(\sts'')}\la\steo=\top(\sts\wedge\g_2(\a_2(\sts)\wedge\g_1(\sts'')))}}
{\parse{\pgg}{\pep}{\rlt}{\a_1(\a_2(\sts))}{\bot}\la\steo=\bot}\mbox{.}
\end{array}
\]
By monotonicity of $\g_2$ and the fact that $\sts\leqslant\g_2(\a_2(\sts))$, we 
have 
$\sts\wedge\g_2(\a_2(\sts)\wedge\g_1(\sts''))\leqslant\sts\wedge\g_2(\a_2(\sts))\wedge\g_2(\g_1(\sts''))=\sts\wedge\g_2(\g_1(\sts''))$. 
By the third axiom of $\a_2$ and $\g_2$, we also have 
$\g_2(\a_2(\sts)\wedge\g_1(\sts''))\geqslant\sts\wedge\g_2(\g_1(\sts''))$
whence
$\sts\wedge\g_2(\a_2(\sts)\wedge\g_1(\sts''))\geqslant\sts\wedge\g_2(\g_1(\sts''))$. 
Consequently, 
$\sts\wedge\g_2(\a_2(\sts)\wedge\g_1(\sts''))$ 
can be replaced with $\sts\wedge\g_2(\g_1(\sts''))$. Hence 
the semantics of the composition of 
$\pefk_1$ and $\pefk_2$ follows the pattern of semantics of indentation
and alignment for mappings 
$\a_1\circ\a_2$ and $\g_2\circ\g_1$. To prove claim 2, it now suffices to 
observe that the mappings $\a,\g$ in the semantics of 
$\ind{\rls\circ\rlr}{(\cdot)}$ equal the compositions of the corresponding
mappings for the semantics of $\ind{\rlr}{(\cdot)}$ and
$\ind{\rls}{(\cdot)}$. For claim 3, it suffices to observe that the 
mappings $\a$, $\g$ given for an indentation and for alignment modify
different parts of the operation state whence their order of application is
irrelevant. 
Claim 4 holds because the mappings $\a$, $\g$ in the alignment semantics are
both idempotent.

Finally, claim 5 is trivial and claim 6 follows from a straightforward case 
study.
\end{proof}

Theorems \ref{sem:maineq} and \ref{sem:absorb} enact bringing alignments 
through all syntactic constructs except concatenation. 
Alignment does not distribute with 
concatenation, because in parsing of an expression of the form 
$\aln{\pep\peq}$, the terminal to be aligned can be in 
the part of the input consumed by $\pep$ or (if parsing of $\pep$ succeeds 
with consuming no input) by $\peq$. Alignment can nevertheless be moved
through concatenation 
if any successful parsing of the first expression in the 
concatenation either never consumes any input or always consumes some input:

\begin{theorem}\label{sem:alnhelp}
Let $\pgg=(\ntsn,\tmst,\pefd,\pes)$ be a \npeg\ and
$\pep,\peq\in\Expr{\pgg}$. 
\begin{enumerate}
\item 
If $\parse{\pgg}{\pep}{\rlt}{\sts}{\top(\sts')}$ implies $\sts'=\sts$
for all $\rlt\in\Rel^+(\Nat)$, $\sts,\sts'\in\State$, then
$\aln{\pep\peq}\sim_{\pgg}\aln{\pep}\aln{\peq}$.
\item If $\parse{\pgg}{\pep}{\rlt}{\sts}{\top(\sts')}$ implies $\sts'\ne\sts$
for all $\rlt\in\Rel^+(\Nat)$, $\sts,\sts'\in\State$, then
$\aln{\pep\peq}\sim_{\pgg}\aln{\pep}\peq$.
\end{enumerate}
\end{theorem}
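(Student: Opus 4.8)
\subsection*{Proof proposal}

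The plan is to prove both equivalences by unfolding the semantic clauses for alignment (clause~10) and for concatenation (clause~4) on each side and checking that the resulting conditions on the token mode, the input state and the outcome coincide. Since by Theorem~\ref{sem:uniq} results of parsing are unique, it suffices to show, for each fixed $\rlt\in\Rel^+(\Nat)$, $\sts\in\State$ and $\steo\in\State+1$, that a derivation of $\parse{\pgg}{\aln{\pep\peq}}{\rlt}{\sts}{\steo}$ exists iff a derivation of the right-hand assertion exists, and I would build each derivation from the other; divergence needs no separate treatment, since it just amounts to neither side admitting any $\steo$. Two consequences of Theorem~\ref{sem:uniq} do the bookkeeping of the alignment flag: a successful parse started with flag $\fls$ returns flag $\fls$ (its result triple lies below the input triple in the order $\geqslant$), and a successful parse that \emph{changes} the operation state returns flag $\fls$ and strictly shortens the input.

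For claim~1, write $\sts=(w,I,b)$. Unfolding $\parse{\pgg}{\aln{\pep\peq}}{\rlt}{(w,I,b)}{\steo}$ through clauses~10 and~4 turns it into a parse of $\pep$ on $(w,I,\tru)$: if that fails, $\steo=\bot$; by the hypothesis, if it succeeds it returns exactly $\top(w,I,\tru)$, after which $\peq$ is parsed on $(w,I,\tru)$ and the last component of its outcome is met with $b$ to yield $\steo$. Unfolding the right-hand side $\aln{\pep}\aln{\peq}$ on $(w,I,b)$ gives the same picture: $\aln{\pep}$ either fails or returns $\top(w,I,b\wedge\tru)=\top(w,I,b)$, and then $\aln{\peq}$ parses $\peq$ on $(w,I,\tru)$ and meets the last component of its outcome with $b$. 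The two descriptions agree verbatim, so the equivalence follows.

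For claim~2, write again $\sts=(w,I,b)$. A successful parse of $\pep$ on $(w,I,\tru)$ now returns, by the hypothesis and the second consequence above, some $\top(w_1,I_1,\fls)$ with $w>w_1$. On the left, the computation continues by parsing $\peq$ on $(w_1,I_1,\fls)$; if that succeeds with $\top(w_2,I_2,b_2)$, then $b_2=\fls$ by the first consequence, so the outer meet with $b$ imposed by clause~10 leaves $\top(w_2,I_2,\fls)$ unchanged. On the right, $\aln{\pep}$ parsed on $(w,I,b)$ returns, on success, $\top(w_1,I_1,b\wedge\fls)=\top(w_1,I_1,\fls)$ --- precisely the result of $\pep$ itself --- and then $\peq$ is parsed on $(w_1,I_1,\fls)$ with no further change. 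Since the residual parse of $\peq$ is the same on both sides, and the failure branches match (failure of $\pep$ makes both sides fail), the outcomes coincide.

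The syntactic unfolding via clauses~4 and~10 is routine; \textbf{the main obstacle is the alignment-flag arithmetic in claim~2}: one must see that the top-level conjunction $b\wedge(\cdot)$ in the semantics of $\aln{\pep\peq}$ is vacuous here, because once $\pep$ has consumed input the flag is already $\fls$ and, by Theorem~\ref{sem:uniq}, remains $\fls$ throughout the subsequent parse of $\peq$, so $\aln{\pep}\peq$ and $\aln{\pep\peq}$ return literally the same triples. The analogous remark for claim~1 is easier: the hypothesis collapses $\aln{\pep}$ to a partial identity (it either fails or returns its input state unchanged).
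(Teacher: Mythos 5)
Your proof is correct and matches the paper's approach: the paper's proof is precisely the "straightforward case study" you carried out, unfolding clauses 4 and 10 and using Theorem~\ref{sem:uniq} to see that the flag is $\fls$ after any state-changing success, which is exactly the bookkeeping needed in claim~2. Your handling of the hypotheses, the flag conjunction, and the failure/divergence branches is all sound.
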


\begin{proof}
Straightforward case study.
\end{proof}

Theorem~\ref{sem:alnhelp} (1) holds also for indentation 
(instead of alignment), the same proof in terms of $\a$, $\g$ is valid.
Finally, the following theorem states that position and indentation of 
terminals are equivalent if the alignment flag is false and 
the token mode is the identity:

\begin{theorem}\label{sem:locind}
Let $\pgg=(\ntsn,\tmst,\pefd,\pes)$ be a \npeg. Let $\tma\in\tmst$, 
$\rls\in\Rel^+(\Nat)$ and $w\in\tmst^*$, $I\in\pws(\Nat)$,
$\steo\in\State+1$. 
Then
$\parse{\pgg}{\loc{\rls}{\tma}}{\idrel}{(w,I,\fls)}{\steo}\iff\parse{\pgg}{\ind{\rls}{\tma}}{\idrel}{(w,I,\fls)}{\steo}$.
\end{theorem}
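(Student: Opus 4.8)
The plan is to unfold both sides of the biconditional down to applications of clause~2 of the semantics (the terminal rule) and to check that they constrain $\steo$ in exactly the same way. Only clauses~2, 8 and~9 can ever apply to the expressions occurring here, and each of clauses~8 and~9 recurses only on the terminal $\tma$, which is then handled non-recursively by clause~2; so every derivation tree involved has depth at most two, neither side can diverge, and it suffices to compare two finite case analyses, one for each value of $\steo\in\State+1$.

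First I would rewrite the left-hand side. By clause~9, $\parse{\pgg}{\loc{\rls}{\tma}}{\idrel}{(w,I,\fls)}{\steo}$ holds iff $\parse{\pgg}{\tma}{\rls}{(w,I,\fls)}{\steo}$ holds, and clause~2 (with token mode $\rls$ and alignment flag $\fls$) then says this holds iff either $w=\tma^iw'$ for some $i\in\rls^{-1}(I)$ with $\steo=\top(w',I\nu\rls(\set i),\fls)$, or no such $i$ exists and $\steo=\bot$. For the right-hand side I would apply clause~8 to $\parse{\pgg}{\ind{\rls}{\tma}}{\idrel}{(w,I,\fls)}{\steo}$, which reduces it to parsing $\tma$ under token mode $\idrel$ in state $(w,\rls^{-1}(I),\fls)$, and then apply clause~2 to that. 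The key elementary facts are $\idrel^{-1}(X)=X$ and $\idrel(Y)=Y$: they turn the side condition of clause~2 into $i\in\rls^{-1}(I)$ and the inner result set into $\rls^{-1}(I)\nu\set i$, which equals $\set i$ because $i\in\rls^{-1}(I)$. Hence the inner parse succeeds exactly when $w=\tma^iw'$ with $i\in\rls^{-1}(I)$, returning $\top(w',\set i,\fls)$ (the alignment flag is $\fls$, since clause~2 always produces flag $\fls$), and fails precisely when no such $i$ exists. Feeding these results back through clause~8 --- which wraps $\top(w',\set i,\fls)$ into $\top(w',I\nu\rls(\set i),\fls)$ and sends $\bot$ to $\bot$ --- yields the very same characterization of $\steo$ as on the left, which closes the argument.

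The one point that requires a little care is the nested unfolding on the right: one must check that the indentation post-processing of clause~8, applied to the singleton candidate set that clause~2 produces under the identity token mode, reconstructs exactly the restriction $I\nu\rls(\set i)$ that clause~2 performs directly under token mode $\rls$ on the left, and in particular that the inner alignment flag is $\fls$ so that clause~8 returns flag $\fls$ as well. The hypothesis $\rls\in\Rel^+(\Nat)$ plays no essential role in this argument; it is inherited from the ambient conventions, and indeed whenever $i\in\rls^{-1}(I)$ we automatically have $\rls^{-1}(I)\nu\set i=\set i\ne\0$, so the reasoning goes through for an arbitrary $\rls\in\Rel(\Nat)$.
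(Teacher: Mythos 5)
Your proposal is correct and matches the paper's proof, which is exactly the "straightforward case study": unfold clause 9 + clause 2 on one side and clause 8 + clause 2 on the other, use $\idrel^{-1}(X)=X$, $\idrel(Y)=Y$ and $i\in\rls^{-1}(I)$ to see that both success results equal $\top(w',I\nu\rls(\set{i}),\fls)$ and both failure conditions coincide. Your side remark that the argument never uses $\rls\in\Rel^+(\Nat)$ is also accurate; the hypothesis is just carried along from the ambient conventions.
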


\begin{proof}
Straightforward case study.
\end{proof}

\subsection{Differences of our approach from previous work}\label{diff}

Our specification of \npeg differs from the definition 
used by Adams and A\uu{g}acan \cite{DBLP:conf/haskell/AdamsA14} by three
essential aspects listed below.  
The last two discrepancies can be understood as
bugs in the original description that have been corrected in the Haskell
\texttt{indentation} package by Adams \cite{implementation}.  This package
also provides means for locally changing the token mode. 
All in all, our modifications fully agree with the 
\texttt{indentation} package.

\begin{enumerate}
\item The position operator $\loc{\rls}{\pep}$ is missing in
\cite{DBLP:conf/haskell/AdamsA14}. The treatment there assumes just one 
default token mode applying to the whole grammar, whence token
positions deviating from the default must be specified using the 
indentation operator. The benefits of the position operator were shortly
discussed in Subsect.~\ref{sem:adams}.

\item According to the grammar semantics provided in
\cite{DBLP:conf/haskell/AdamsA14}, the alignment flag is never 
changed at the end of parsing of an expression of the form $\aln{\pep}$. 
This is not appropriate if $\pep$ succeeds without consuming any token,
as the alignment flag would unexpectedly remain true during 
parsing of the next token that is out of scope of the alignment
operator. The value the alignment flag had before 
starting parsing $\aln{\pep}$ should be restored in this case. 
This is the purpose of conjunction in the
alignment semantics described in this paper.

\item In \cite{DBLP:conf/haskell/AdamsA14}, 
an alignment is interpreted \wrt the indentation 
baseline of the block that corresponds to the parsing expression to 
which the alignment operator is applied. Indentation operators occurring
inside this expression and 
processed while the alignment flag is true are neglected. 
In the semantics described in our paper, raising the alignment flag
does not suppress new indentations. Alignments are 
interpreted \wrt the indentation baseline in force at the aligned token 
site. This seems more appropriate than the former approach where 
the indentations cancelled because of an alignment do not apply even to 
the subsequent non-aligned tokens. Distributivity of
indentation and alignment fails in the semantics of 
\cite{DBLP:conf/haskell/AdamsA14}. 
Note that alignment of a block nevertheless 
suppresses the influence of position operators whose scope extend over
the first token of the block.

\end{enumerate}

Our grammar semantics has also two purely formal deviations from the
semantics used by Adams and
A\uu{g}acan~\cite{DBLP:conf/haskell/AdamsA14} and
Ford~\cite{DBLP:conf/popl/Ford04}.

\begin{enumerate}
\item 
We keep track of the rest of the input in the operation state while 
both \cite{DBLP:conf/haskell/AdamsA14,DBLP:conf/popl/Ford04} 
expose the consumed part of the input instead. 
This difference was introduced for simplicity and to 
achieve uniform decreasing of operation states in Theorem~\ref{sem:uniq}. 

\item We do not have explicit step counts. They were used in 
\cite{DBLP:conf/popl/Ford04} to compose proofs by induction. We provide 
analogous proofs by induction on the shape of derivation trees. 

\end{enumerate}

\section{Elimination of alignment and position operators}\label{elim}

Adams \cite{DBLP:conf/popl/Adams13} describes alignment elimination in the
context of CFGs.  In \cite{DBLP:conf/haskell/AdamsA14}, Adams and
A\uu{g}acan claim that alignment elimination process for PEGs is more
difficult due to the lookahead construct.  To our knowledge, no concrete
process of semantics-preserving alignment elimination is described for PEGs
before.  We provide one below for well-formed grammars. 
We rely on the existence of
position operators in the grammar; this is not an issue since
we also show that position operators can be eliminated from alignment-free
grammars.

\subsection{Approximation semantics and well-formed
expressions}\label{trans:wf}

For defining well-formedness, we first need to introduce 
\term{approximation semantics} that consists
of assertions of the form $\appr{\pgg}{\pee}{n}$ where $\pee\in\Expr{\pgg}$
and $n\in\set{-1,0,1}$. 
This semantics is a decidable extension of the predicate that 
tells whether parsing of $\pee$ may succeed with consuming no input 
(result $0$), succeed with consuming some input (result $1$) or fail 
(result $-1$). No particular input strings, indentation sets etc.
are involved, whence the semantics is not deterministic. The following set
of clauses define the approximation semantics inductively.

\begin{enumerate}
\item $\appr{\pgg}{\e}{0}$.
\item For every $\tma\in\tmst$, $\appr{\pgg}{\tma}{1}$ and $\appr{\pgg}{\tma}{-1}$.
\item For every $\ntx\in\ntsn$, 
$\appr{\pgg}{\ntx}{n}$ if $\appr{\pgg}{\pefd(\ntx)}{n}$.
\item For every $\pep,\peq\in\Expr{\pgg}$, $\appr{\pgg}{\pep\peq}{n}$ holds
in four cases:
\begin{itemize}
\item $\appr{\pgg}{\pep}{0}$, $\appr{\pgg}{\peq}{0}$ and $n=0$;
\item There exist $n',n''\in\set{0,1}$ such that 
$\appr{\pgg}{\pep}{n'}$, $\appr{\pgg}{\peq}{n''}$,
$1\in\set{n',n''}$ and $n=1$;
\item There exists $n'\in\set{0,1}$ such that 
$\appr{\pgg}{\pep}{n'}$, $\appr{\pgg}{\peq}{-1}$ and
$n=-1$;
\item $\appr{\pgg}{\pep}{-1}$ and $n=-1$.
\end{itemize}
\item For every $\pep,\peq\in\Expr{\pgg}$, $\appr{\pgg}{\pep\alt\peq}{n}$
holds in two cases:
\begin{itemize}
\item $\appr{\pgg}{\pep}{n}$ and $n\in\set{0,1}$;
\item $\appr{\pgg}{\pep}{-1}$ and $\appr{\pgg}{\peq}{n}$.
\end{itemize}
\item For every $\pep\in\Expr{\pgg}$, $\appr{\pgg}{\lkh{\pep}}{n}$ holds in
two cases:
\begin{itemize}
\item $\appr{\pgg}{\pep}{-1}$ and $n=0$;
\item There exists $n'\in\set{0,1}$ such that $\appr{\pgg}{\pep}{n'}$ and $n=-1$.
\end{itemize}
\item For every $\pep\in\Expr{\pgg}$, $\appr{\pgg}{\rep{\pep}}{n}$ holds in
two cases:
\begin{itemize}
\item $\appr{\pgg}{\pep}{-1}$ and $n=0$;
\item $\appr{\pgg}{\pep}{-1}$, $\appr{\pgg}{\pep}{1}$ and $n=1$.
\end{itemize}
\item For every $\pep\in\Expr{\pgg}$ and $\rlr\in\Rel(\Nat)$, 
$\appr{\pgg}{\ind{\rlr}{\pep}}{n}$ if $\appr{\pgg}{\pep}{n}$.
\item For every $\pep\in\Expr{\pgg}$ and $\rls\in\Rel(\Nat)$, 
$\appr{\pgg}{\loc{\rls}{\pep}}{n}$ if $\appr{\pgg}{\pep}{n}$.
\item For every $\pep\in\Expr{\pgg}$, 
$\appr{\pgg}{\aln{\pep}}{n}$ if $\appr{\pgg}{\pep}{n}$.
\end{enumerate}

On the PEG constructs (1--7), our definition basically copies that given by 
Ford~\cite{DBLP:conf/popl/Ford04}, except for the case
$\appr{\pgg}{\rep{\pep}}{1}$ where our definition requires 
$\appr{\pgg}{\pep}{-1}$ besides $\appr{\pgg}{\pep}{1}$. This is sound since 
if parsing of $\pep$ never fails then 
parsing of $\rep{\pep}$ cannot terminate. The difference does not matter in
the grammar transformations below as they assume repetition-free grammars.

\begin{theorem}
Let $\pgg=(\ntsn,\tmst,\pefd,\pes)$ be a \npeg. Assume that
$\parse{\pgg}{\pee}{\rlt}{\sts}{\steo}$ for some $\rlt\in\Rel(\Nat)$ and 
$\sts\in\State$, $\steo\in\State+1$. Then:
\begin{enumerate}
\item If $\steo=\top(\sts)$ then $\appr{\pgg}{\pee}{0}$;
\item If $\steo=\top(\sts')$ for some $\sts'\ne\sts$ then
$\appr{\pgg}{\pee}{1}$;
\item If $\steo=\bot$ then $\appr{\pgg}{\pee}{-1}$.
\end{enumerate}
\end{theorem}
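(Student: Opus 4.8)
The plan is to argue by induction on the shape of the derivation tree of $\parse{\pgg}{\pee}{\rlt}{\sts}{\steo}$, by cases on the last rule applied, transferring the induction hypothesis through the matching clause of the approximation semantics. Throughout, the three conclusions of the theorem are the instances $n=0$, $n=1$ and $n=-1$ of $\appr{\pgg}{\pee}{n}$, triggered respectively by $\steo=\top(\sts)$, by $\steo=\top(\sts')$ with $\sts'\ne\sts$, and by $\steo=\bot$. The leaves are immediate: $\e$ only yields $\top(\sts)$, matching clause~1; a terminal $\tma$ either fails, matching $\appr{\pgg}{\tma}{-1}$, or consumes its token, so the resulting triple has a strictly shorter input than $\sts$ and hence differs from it, matching $\appr{\pgg}{\tma}{1}$. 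The non-terminal, choice, lookahead and position cases merely push the hypothesis through clauses 3, 5, 6 and 9; for the PEG constructs among these the reasoning mirrors Ford's argument for plain PEG.

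The substantial cases are concatenation, repetition, indentation and alignment, whose operational rules split a derivation into parts (or iterate it) and thereby introduce intermediate operation states that must be compared both with $\sts$ and with the final result. The key tool is Theorem~\ref{sem:uniq}: along a successful derivation the operation state only decreases \wrt $\geqslant$, and any genuine change of state is accompanied by a \emph{strict} shrinking of the remaining input; in particular, an unchanged input forces an unchanged state. For $\pep\peq$, from $\parse{\pgg}{\pep}{\rlt}{\sts}{\top(\sts')}$ and $\parse{\pgg}{\peq}{\rlt}{\sts'}{\steo}$ we have $\sts\geqslant\sts'$, together with $\sts'\geqslant\sts''$ whenever $\steo=\top(\sts'')$. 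Hence $\steo=\top(\sts)$ squeezes $\sts=\sts'=\sts''$, so both factors preserve their state and the first bullet of clause~4 applies; if $\steo=\top(\sts'')$ with $\sts''\ne\sts$ then at least one factor changed its state, so the hypothesis supplies a $1$ for that factor and a value in $\set{0,1}$ for the other, matching the second bullet; and $\steo=\bot$ is handled by the third or the fourth bullet according as $\pep$ succeeded or failed. For $\ind{\rlr}{\pep}$ and $\aln{\pep}$, the operation state passed down to $\pep$ agrees with $\sts$ on the input component, so Theorem~\ref{sem:uniq} applied to the $\pep$-subderivation shows that this subderivation keeps its state exactly when $\steo$ keeps the input of $\sts$: in that case the hypothesis yields $0$ for $\pep$, otherwise $1$, and clauses 8 and 10 of the approximation semantics copy the value from $\pep$ to the whole expression.

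Repetition is the most delicate case and, I expect, the main obstacle. Note first that rule~7 never produces $\bot$ --- a $\bot$-result would need an infinite chain of applications of its second case --- so in the inductive subcase the subderivation $\parse{\pgg}{\rep{\pep}}{\rlt}{\sts'}{\steo}$ has a $\top$-result; this is essential, since there is no approximation clause yielding $\appr{\pgg}{\rep{\pep}}{-1}$. When $\pep$ fails, the first bullet of clause~7 fires from $\appr{\pgg}{\pep}{-1}$ and $\steo=\top(\sts)$. When $\pep$ succeeds with $\top(\sts')$ and then $\rep{\pep}$ runs from $\sts'$ to $\top(\sts'')$, applying the hypothesis to the $\rep{\pep}$-subderivation yields $\appr{\pgg}{\rep{\pep}}{0}$ or $\appr{\pgg}{\rep{\pep}}{1}$, and inspection of the two clauses for $\rep{\pep}$ shows that in either case $\appr{\pgg}{\pep}{-1}$ holds as well; combining this with the value in $\set{0,1}$ obtained for $\pep$ from its own subderivation, and using $\sts\geqslant\sts'\geqslant\sts''$ to decide whether $\sts''$ equals $\sts$, yields exactly the clause of $\rep{\pep}$ required by the theorem. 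Once Theorem~\ref{sem:uniq} is in hand, everything outside this case reduces to routine clause-matching.
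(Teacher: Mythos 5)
Your proposal is correct and takes essentially the same route as the paper, whose entire proof is the induction on the shape of the derivation tree of $\parse{\pgg}{\pee}{\rlt}{\sts}{\steo}$ that you carry out, with the decrease and strictness facts of Theorem~\ref{sem:uniq} supplying exactly the bookkeeping needed in the concatenation, repetition, indentation and alignment cases. One caveat worth a remark (a looseness the paper itself shares): the statement allows $\rlt\in\Rel(\Nat)$ and subderivations under position operators may run in token modes outside $\Rel^+(\Nat)$, whereas Theorem~\ref{sem:uniq} is stated only for $\rlt\in\Rel^+(\Nat)$, so strictly one should either note that the monotonicity/strictness properties used hold for arbitrary token modes (with the indentation and position relations occurring in the grammar taken from $\Rel^+(\Nat)$, as well-formedness requires) or restrict the relations accordingly.
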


\begin{proof}
By induction on the shape of the derivation tree of the assertion
$\parse{\pgg}{\pee}{\rlt}{\sts}{\steo}$.
\end{proof}

\term{Well-formedness} is a decidable conservative 
approximation of the predicate that is true iff parsing in $\pgg$ never
diverges (it definitely excludes grammars with left recursion but can  
exclude also some safe grammars). Well-formedness of PEGs was introduced by
Ford~\cite{DBLP:conf/popl/Ford04}. 
The following set of clauses is an inductive definition of predicate
$\WF_{\pgg}$, well-formedness of expressions, for \npeg:
\begin{enumerate}
\item $\e\in\WF_{\pgg}$;
\item For every $\tma\in\tmst$, $\tma\in\WF_{\pgg}$;
\item For every $\ntx\in\ntsn$, $\ntx\in\WF_{\pgg}$ if
$\pefd(\ntx)\in\WF_{\pgg}$;
\item For every $\pep,\peq\in\Expr{\pgg}$, $\pep\peq\in\WF_{\pgg}$ if 
$\pep\in\WF_{\pgg}$ and, in addition, $\appr{\pgg}{\pep}{0}$ implies
$\peq\in\WF_{\pgg}$;
\item For every $\pep,\peq\in\Expr{\pgg}$, $\pep\alt\peq\in\WF_{\pgg}$ if
$\pep\in\WF_{\pgg}$ and, in addition, $\appr{\pgg}{\pep}{-1}$ implies
$\peq\in\WF_{\pgg}$;
\item For every $\pep\in\Expr{\pgg}$, $\lkh{\pep}\in\WF_{\pgg}$ if
$\pep\in\WF_{\pgg}$;
\item For every $\pep\in\Expr{\pgg}$, $\rep{\pep}\in\WF_{\pgg}$ if
$\nappr{\pgg}{\pep}{0}$ and $\pep\in\WF_{\pgg}$;
\item For every $\pep\in\Expr{\pgg}$ and $\rlr\in\Rel^+(\Nat)$,
$\ind{\rlr}{\pep}\in\WF_{\pgg}$ if $\pep\in\WF_{\pgg}$;
\item For every $\pep\in\Expr{\pgg}$ and $\rls\in\Rel^+(\Nat)$,
$\loc{\rls}{\pep}\in\WF_{\pgg}$ if $\pep\in\WF_{\pgg}$;
\item For every $\pep\in\Expr{\pgg}$, $\aln{\pep}\in\WF_{\pgg}$ if
$\pep\in\WF_{\pgg}$.
\end{enumerate}

This definition rejects non-terminals with directly or indirectly left 
recursive rules since for a concatenation $\pep\peq$ to be well-formed, 
$\pep$ must be well-formed, leading to an infinite derivation 
in the case of any kind of left recursion. On the other hand, 
requiring both 
$\pep\in\WF_{\pgg}$ and $\peq\in\WF_{\pgg}$ in the clause for 
$\pep\peq\in\WF_{\pgg}$ would be too restrictive since this would reject 
non-terminals
with meaningful recursive productions like $\ntx\mapsto\tma\ntx\alt\e$. 

Again, clauses for PEG constructs (1--7) mostly copy the definition given by 
Ford~\cite{DBLP:conf/popl/Ford04}. This time, the choice case is an
exception. In
\cite{DBLP:conf/popl/Ford04}, $\pep\alt\peq$ is considered well-formed only
if both $\pep$ and $\peq$ are well-formed, which needlessly rejects 
non-terminals with safe recursive rules like $\ntx\mapsto\e\alt\ntx$. We require
$\peq\in\WF_{\pgg}$ only if $\peq$ could possibly be executed, i.e. if
$\appr{\pgg}{\pep}{-1}$.

A grammar $\pgg=(\ntsn,\tmst,\pefd,\pes)$ is called \term{well-formed} if 
$\pep\in\WF_{\pgg}$ for every expression $\pep$ occurring as a subexpression 
in some $\pefd(\ntx)$ or $\pes$. 
Ford \cite{DBLP:conf/popl/Ford04} proves by induction on the length of the 
input string that, in well-formed grammars, parsing of every expression whose 
all subexpressions are well-formed terminates on every input string. We can
prove an analogous result in a similar way but we prefer to generalize 
the statement to a stricter
semantics which enables to occasionally construct easier proofs later. 
The new semantics, which we call \term{strict}, is defined by replacing
the choice clause in the definition of Subsect.~\ref{sem:adams} 
with the following:
\begin{enumerate}\setcounter{enumi}{4}
\item For every $\pep,\peq\in\Expr{\pgg}$,
$\parse{\pgg}{\pep\alt\peq}{\rlt}{\sts}{\steo}$ holds in two cases:
\begin{itemize}
\item There exists a triple $\sts'$ such that
$\parse{\pgg}{\pep}{\rlt}{\sts}{\top(\sts')}$, $o=\top(\sts')$ and, in
addition, $\appr{\pgg}{\pep}{-1}$ implies
$\parse{\pgg}{\peq}{\rlt}{\sts}{\steo'}$ for some
$\steo'\in\State+1$;
\item $\parse{\pgg}{\pep}{\rlt}{\sts}{\bot}$ and
$\parse{\pgg}{\peq}{\rlt}{\sts}{\steo}$.
\end{itemize}
\end{enumerate}

The new semantics is more restrictive since, to finish 
parsing of an expression of the form $\pep\alt\peq$ 
after parsing $\pep$ successfully, also $\peq$ must be parsed 
if $\appr{\pgg}{\pep}{-1}$ happens to be the case. In the standard
semantics, parsing of $\pep\alt\peq$ does not have to try $\peq$ if parsing
of $\pep$ is successful. So, if parsing of an expression terminates in the
strict semantics then it terminates with the same result in the standard 
semantics (but not necessarily vice versa). Therefore proving that parsing
always gives a result in the strict semantics will establish this also for
the standard semantics. In the rest, we sign strict semantics with
exclamation mark, i.e., parsing assertions will be of the form
$\sparse{\pgg}{\pee}{\rlt}{\sts}{\steo}$. 

\begin{theorem}\label{elim:wf:strict}
Let $\pgg=(\ntsn,\tmst,\pefd,\pes)$ be a well-formed \npeg\ and let
$\pee\in\Expr{\pgg}$. Assume that all subexpressions of $\pee$ are 
well-formed. Then for every $\rlt\in\Rel^+(\Nat)$ and
$\sts\in\State$, there exists $\steo\in\State+1$ such that 
$\sparse{\pgg}{\pee}{\rlt}{\sts}{\steo}$.
\end{theorem}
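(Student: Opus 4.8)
The plan is to prove the statement by a nested induction: an outer induction on the length of the input string $w$ (where $\sts = (w,I,b)$), and, for each fixed length, an inner induction on the structure of $\pee$. This is the pattern Ford uses, and the reason it works is that the only way a derivation of $\sparse{\pgg}{\pee}{\rlt}{\sts}{\steo}$ can recurse on an input of the \emph{same} length is through a non-terminal (clause 3) or through the first component of a concatenation (clause 4) or a choice (clause 5); in every such case well-formedness gives us control. The outer induction handles the genuinely recursive situations — after a terminal is consumed (clause 2), the remaining input is strictly shorter, so any subsequent parsing obligation falls under the outer induction hypothesis.

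I would carry out the structural case analysis on $\pee$ as follows. For $\e$, $\tma$, the result is immediate from clauses 1--2. For $\ntx\in\ntsn$: by well-formedness $\pefd(\ntx)\in\WF_{\pgg}$ and all its subexpressions are well-formed, so the inner induction hypothesis (structural, same input) — note $\pefd(\ntx)$ is not literally a subexpression of $\ntx$, so here I actually need the outer statement stated for \emph{all} well-formed expressions with well-formed subexpressions, proved by induction on $w$ with an auxiliary well-founded order that puts $\pefd(\ntx)$ below $\ntx$; the standard fix is to do the induction on pairs $(|w|, \text{"size"})$ where size is defined so that $\WF$ guarantees the recursion is well-founded. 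For $\ind{\rlr}{\pep}$, $\loc{\rls}{\pep}$, $\aln{\pep}$: apply the appropriate hypothesis to $\pep$ with the modified operation state (for $\ind{}$, on $(w,\rlr^{-1}(I),b)$; for $\loc{}$, with token mode $\rls$ — and here I use $\rls\in\Rel^+(\Nat)$ from clause 9 of $\WF$; for $\aln{}$, on $(w,I,\tru)$) to get some result, then read off a result for $\pee$ from clauses 8--10. For $\lkh{\pep}$ and $\rep{\pep}$: for negation, get a result for $\pep$ and dispatch on whether it is $\bot$ or $\top(\sts')$; for repetition, $\nappr{\pgg}{\pep}{0}$ (from clause 7 of $\WF$) together with the previous theorem means any successful parse of $\pep$ strictly shortens the input, so the iteration in clause 7 is driven by the outer induction on $|w|$ and terminates.

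The two cases that carry the real content are concatenation and choice, and they are exactly where strictness was introduced to help. For $\pep\peq$: by $\WF$, $\pep\in\WF_{\pgg}$, so by the induction hypothesis there is a result for $\pep$ on $\sts$. If it is $\bot$, we are done via the second bullet of clause 4. If it is $\top(\sts')$, then by Theorem~\ref{sem:uniq} either $\sts'=\sts$ or $\sts'$ has a strictly shorter string component. In the former case, Theorem on approximation (the one preceding $\WF$) gives $\appr{\pgg}{\pep}{0}$, so $\WF$ forces $\peq\in\WF_{\pgg}$ and we may apply the structural hypothesis to $\peq$ on $\sts$; in the latter case the outer ($|w|$-)hypothesis applies to $\peq$ on $\sts'$. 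Either way a result for $\peq$, hence for $\pep\peq$, is obtained. For $\pep\alt\peq$ in the strict semantics: again get a result for $\pep$; if it is $\bot$ then $\appr{\pgg}{\pep}{-1}$, so $\WF$ gives $\peq\in\WF_{\pgg}$ and the structural hypothesis yields a result for $\peq$, matching the second bullet; if it is $\top(\sts')$, the first (strict) bullet additionally demands a result for $\peq$ on $\sts$ \emph{only when} $\appr{\pgg}{\pep}{-1}$ — but that condition likewise makes $\peq$ well-formed, so the structural hypothesis again supplies the needed $\steo'$.

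I expect the main obstacle to be purely bookkeeping: setting up the induction metric so that the non-terminal case (clause 3), where we descend from $\ntx$ into $\pefd(\ntx)$ without shortening the input and without $\pefd(\ntx)$ being a subexpression of $\ntx$, is properly founded. The honest way is to interleave the two inductions into a single induction on $|w|$ and, within that, an induction on the derivation-tree shape of the \emph{hypothetical} parse — but since we are proving existence of such a tree rather than reasoning about a given one, the cleanest route is to follow Ford and induct on $|w|$ only, proving the auxiliary statement "for all well-formed $\pee$ all of whose subexpressions are well-formed, a result exists," and inside the $|w|$-step use a secondary induction on $\pee$ in which the $\ntx$ case appeals to well-formedness of $\pefd(\ntx)$ — this terminates because $\WF$ of a concatenation only requires the \emph{left} operand to be well-formed, and the chain of "leftmost symbols" cannot cycle through a non-terminal without violating $\WF$. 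Everything else is the routine case analysis sketched above.
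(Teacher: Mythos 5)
Your outer induction (on the length of the input string) and your case analysis are essentially the paper's: in particular, the use of the approximation theorem together with Theorem~\ref{sem:uniq} in the concatenation case, and the observation that in the strict choice semantics the extra obligation for $\peq$ arises exactly when $\appr{\pgg}{\pep}{-1}$, which is exactly when clause 5 of the well-formedness definition supplies $\peq\in\WF_{\pgg}$, are the intended content. (One small transfer step you should make explicit: the approximation theorem is stated for the standard semantics, so a result obtained in the strict semantics must first be read as a standard-semantics result — which is fine, since strict results are also standard results.)

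The one point you leave genuinely unresolved is the inner induction itself, and neither of your patches (an unspecified pair $(|w|,\text{size})$ measure, or the claim that the chain of leftmost symbols cannot cycle without violating $\WF_{\pgg}$) is carried out. The paper's inner induction is not on the structure of $\pee$ but on the shape of the derivation tree of the assertion $\pee\in\WF_{\pgg}$. Since $\WF_{\pgg}$ is inductively defined, the assertion $\ntx\in\WF_{\pgg}$ has a finite derivation whose immediate subderivation is $\pefd(\ntx)\in\WF_{\pgg}$, so the non-terminal case is handled by the inner induction hypothesis directly — no auxiliary order, no reachability argument, no change to the outer measure. This choice of induction object also fits the conditional clauses of $\WF_{\pgg}$ exactly: the derivation of $\pep\peq\in\WF_{\pgg}$ contains a subderivation for $\peq$ precisely when $\appr{\pgg}{\pep}{0}$, and that of $\pep\alt\peq\in\WF_{\pgg}$ contains one for $\peq$ precisely when $\appr{\pgg}{\pep}{-1}$ — which are precisely the situations in which your argument needs an induction hypothesis for $\peq$ on an input of the same length (when the input has shrunk, the outer hypothesis applies instead, as you say). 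You correctly reject inducting on the hypothetical parse derivation — one cannot induct on a tree whose existence is being proved — but the well-formedness derivation is a tree you already possess; replacing "structural induction on $\pee$" by "induction on the derivation of $\pee\in\WF_{\pgg}$" closes the gap and the rest of your argument goes through as written.
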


\begin{proof}
By induction on the length of the input string (i.e., the first 
component of $\sts$). 
The induction step uses induction on the shape of the derivation tree of 
the assertion $\pee\in\WF_{\pgg}$.
\end{proof}

\subsection{Splitting}\label{trans:split}

As the repetition operator can always be eliminated (by adding a new 
non-terminal $\nta_{\pep}$ with $\pefd(\nta_{\pep})=\pep\nta_{\pep}\alt\e$ 
for each subexpression $\pep$ that occurs under the star operator), 
we may assume that the input grammar $\pgg$ is repetition-free. 
The first stage of our process also assumes that $\pgg$ is
well-formed, all negations are applied to atomic expressions, and all
choices are disjoint. A choice expression $\pep\alt\peq$ is called 
\term{disjoint} if parsing of $\pep$ and $\peq$ cannot succeed in the same 
input state and token mode. Achieving the last two 
preconditions can be considered as a preparatory and previously studied
(e.g. in \cite{DBLP:conf/popl/Ford04} as stage 1 of negation elimination)
step of the process. 
Issues concerning this are discussed briefly in 
Subsect.~\ref{trans:prelim}. 

We use in principle the same splitting algorithm as in stage 2 of the
negation elimination process described by 
Ford~\cite{DBLP:conf/popl/Ford04}, adding clauses for the extra operators
in \npeg. The approach defines two functions
$\pefg_0:\WF_{\pgg}\to\Expr{\pgg}$ and $\pefg_1:\Expr{\pgg}\to\Expr{\pgg}$
as follows ($F$ is a metavariable denoting any expression that always fails,
e.g., $\lkh\e$):
\[\setlength{\arraycolsep}{2pt}
\begin{array}{lcl@{\hspace{0.5em}}lcl}
\pefg_0(\e)&=&\e&\pefg_1(\e)&=&F\\
\pefg_0(\tma)&=&F&\pefg_1(\tma)&=&\tma\\
\pefg_0(\ntx)&=&\pefg_0(\pefd(\ntx))&\pefg_1(\ntx)&=&\ntx\\
\pefg_0(\pep\peq)&=&\left\lbrace
\begin{array}{ll}
\pefg_0(\pep)\pefg_0(\peq)&\mbox{if $\appr{\pgg}{\pep}{0}$}\\
F&\mbox{otherwise}\end{array}\right\rbrace&\pefg_1(\pep\peq)&=&\pefg_1(\pep)\pefg_1(\peq)\alt\pefg_1(\pep)\pefg_0(\peq)\alt\pefg_0(\pep)\pefg_1(\peq)\\
\pefg_0(\pep\alt\peq)&=&\left\lbrace
\begin{array}{ll}
\pefg_0(\pep)\alt\pefg_0(\peq)&\mbox{if $\appr{\pgg}{\pep}{-1}$}\\
\pefg_0(\pep)&\mbox{otherwise}
\end{array}\right\rbrace&\pefg_1(\pep\alt\peq)&=&\left\lbrace
\begin{array}{ll}
\pefg_1(\pep)\alt\pefg_1(\peq)&\mbox{if $\appr{\pgg}{\pep}{-1}$}\\
\pefg_1(\pep)&\mbox{otherwise}
\end{array}
\right\rbrace\\
\pefg_0(\lkh{\pep})&=&\lkh(\pefg_1(\pep)\alt\pefg_0(\pep))&\pefg_1(\lkh{\pep})&=&F\\
\pefg_0(\ind{\rlr}{\pep})&=&\ind{\rlr}{(\pefg_0(\pep))}&\pefg_1(\ind{\rlr}{\pep})&=&\ind{\rlr}{(\pefg_1(\pep))}\\
\pefg_0(\loc{\rls}{\pep})&=&\loc{\rls}{(\pefg_0(\pep))}&\pefg_1(\loc{\rls}{\pep})&=&\loc{\rls}{(\pefg_1(\pep))}\\
\pefg_0(\aln{\pep})&=&\aln{\pefg_0(\pep)}&\pefg_1(\aln{\pep})&=&\aln{\pefg_1(\pep)}
\end{array}
\]

Correctness of the definition of $\pefg_0$ follows by induction on the 
shape of the derivation tree of the assertion $\pee\in\WF_{\pgg}$. 
In the negation case, 
we use that negations are applied to atomic expressions, whence the reference 
to $\pefg_1$ can be eliminated by a replacement from its definition. The 
definition of $\pefg_1$ is sound by induction on the shape of the expression 
$\pee$. 

A new grammar $\pgg'=(\ntsn,\tmst,\pefd',\pes')$ is defined using
$\pefg_0,\pefg_1$ by equations $\pefd'(\ntx)=\pefg_1(\pefd(\ntx))$,
$\pes'=\pefg_1(\pes)\alt\pefg_0(\pes)$. 
The equivalence of the input and output grammars relies on the splitting 
invariant established by Theorem~\ref{trans:split:inv} below
which allows instead of each parsing expression~$\pee$ 
with negations in front of atoms and disjoint choices in $\pgg$ 
to equivalently use parsing expression $\pefg_1(\pee)\alt\pefg_0(\pee)$
in $\pgg'$. The claim is analogous to the splitting invariant 
used by \cite{DBLP:conf/popl/Ford04} but we can provide a simpler proof
using the strict semantics (an analogous proof using the standard semantics 
would fail in the choice case).

\begin{theorem}\label{trans:split:inv}
Let $\parse{\pgg}{\pee}{\rlt}{\sts}{\steo}$ where $\pee\in\Expr{\pgg}$, 
$\rlt\in\Rel^+(\Nat)$ and $\sts\in\State$, $\steo\in\State+1$. 
Assuming that all choices in the rules of $\pgg$ and expression $\pee$
are disjoint and the negations are applied to atoms, the following holds:
\begin{enumerate}
\item If $\steo=\top(\sts)$ then
$\parse{\pgg'}{\pefg_0(\pee)}{\rlt}{\sts}{\top(\sts)}$ and 
$\parse{\pgg'}{\pefg_1(\pee)}{\rlt}{\sts}{\bot}$;
\item If $\steo=\top(\sts')$ where $\sts'\ne\sts$ then
$\parse{\pgg'}{\pefg_0(\pee)}{\rlt}{\sts}{\bot}$ and 
$\parse{\pgg'}{\pefg_1(\pee)}{\rlt}{\sts}{\top(\sts')}$;
\item If $\steo=\bot$ then
$\parse{\pgg'}{\pefg_0(\pee)}{\rlt}{\sts}{\bot}$ and 
$\parse{\pgg'}{\pefg_1(\pee)}{\rlt}{\sts}{\bot}$.
\end{enumerate}
\end{theorem}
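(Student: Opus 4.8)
The plan is to deduce the assertion from its strict-semantics counterpart and then argue by induction on the shape of the strict derivation tree in $\pgg$, assembling the required \emph{standard} parsing facts in $\pgg'$ out of the induction hypotheses by means of the ordinary semantic rules of $\pgg'$. As everywhere in this subsection, $\pgg$ is well-formed and repetition-free, and $\pee$ together with all its subexpressions is well-formed, has disjoint choices, and has negations applied only to atoms.

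First I would pass to the strict semantics. Given $\parse{\pgg}{\pee}{\rlt}{\sts}{\steo}$, Theorem~\ref{elim:wf:strict} provides some $\steo_1$ with $\sparse{\pgg}{\pee}{\rlt}{\sts}{\steo_1}$; as strict parsing is a restriction of standard parsing and the standard result is unique (Theorem~\ref{sem:uniq}), $\steo_1=\steo$. So it suffices to prove, for every derivable assertion $\sparse{\pgg}{\pee}{\rlt}{\sts}{\steo}$ (with $\pee$ and $\pgg$ as above), the three implications, by induction on the derivation; I split on the last rule used.

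The base cases $\pee=\e$, $\pee=\tma$ and $\pee=\ntx$ are read off the definitions of $\pefg_0$ and $\pefg_1$, using in the non-terminal case that $\pefg_1(\ntx)=\ntx$ is parsed in $\pgg'$ through $\pefd'(\ntx)=\pefg_1(\pefd(\ntx))$, to which the induction hypothesis applies. For indentation, position and alignment --- where the construct acts on a single subexpression $\pep$, over which both $\pefg_0$ and $\pefg_1$ distribute --- Theorem~\ref{sem:uniq} (applicable since the relations involved lie in $\Rel^+(\Nat)$ by well-formedness, and, for position, with the changed token mode) shows that such a construct returns $\sts$ unchanged exactly when the inner parse of $\pep$ returns its start state unchanged, so each of the three claims for $\pep$ yields the corresponding one for the construct. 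For negation $\pee=\lkh{\pep}$ with $\pep$ atomic, the induction hypothesis for $\pep$ makes $\pefg_1(\pep)\alt\pefg_0(\pep)$ succeed in $\pgg'$ precisely when $\pep$ succeeds in $\pgg$; since $\pefg_0(\lkh{\pep})=\lkh{(\pefg_1(\pep)\alt\pefg_0(\pep))}$ and $\pefg_1(\lkh{\pep})=F$, only cases 1 and 3 can occur and both follow. For concatenation $\pee=\pep\peq$, if $\pep$ succeeds I would apply the induction hypotheses to $\pep$ (in state $\sts$) and to $\peq$ (in the state reached after parsing $\pep$) and check each of the three disjuncts of $\pefg_1(\pep\peq)$ and the value of $\pefg_0(\pep\peq)$ (which is $\pefg_0(\pep)\pefg_0(\peq)$ when $\appr{\pgg}{\pep}{0}$ and $F$ otherwise) against all three possible forms of $\steo$; if $\pep$ fails, the induction hypothesis for $\pep$ alone makes every disjunct and also $\pefg_0(\pep\peq)$ fail. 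The computations are routine once one notes, from the induction hypotheses, that $\pefg_0(\cdot)$ of an expression can succeed in $\pgg'$ only without consuming input and $\pefg_1(\cdot)$ only while consuming input, and that $\appr{\pgg}{\pep}{0}$ holds whenever $\pep$ can succeed consuming no input (soundness of the approximation semantics).

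The crux --- and the reason for passing to the strict semantics --- is the choice case $\pee=\pep\alt\peq$. If $\parse{\pgg}{\pep}{\rlt}{\sts}{\bot}$ then $\appr{\pgg}{\pep}{-1}$ by soundness of the approximation semantics, hence $\pefg_0(\pep\alt\peq)=\pefg_0(\pep)\alt\pefg_0(\peq)$ and $\pefg_1(\pep\alt\peq)=\pefg_1(\pep)\alt\pefg_1(\peq)$, and combining the induction hypotheses for $\pep$ and for $\peq$ (the latter being parsed from $\sts$ by the second sub-case of the strict choice rule) establishes the claim. If $\parse{\pgg}{\pep}{\rlt}{\sts}{\top(\sts')}$, then when $\appr{\pgg}{\pep}{-1}$ the strict choice rule still supplies a sub-derivation $\sparse{\pgg}{\peq}{\rlt}{\sts}{\steo_2}$; disjointness of the choice $\pep\alt\peq$ prevents $\peq$ from succeeding in state $\sts$ and token mode $\rlt$, so $\steo_2=\bot$, and by the induction hypothesis (case 3) both $\pefg_0(\peq)$ and $\pefg_1(\peq)$ fail in $\pgg'$; therefore $\pefg_0(\pep\alt\peq)$ behaves like $\pefg_0(\pep)$ and $\pefg_1(\pep\alt\peq)$ like $\pefg_1(\pep)$ in $\pgg'$, which is exactly the claim (when $\nappr{\pgg}{\pep}{-1}$ the equalities $\pefg_0(\pep\alt\peq)=\pefg_0(\pep)$ and $\pefg_1(\pep\alt\peq)=\pefg_1(\pep)$ hold syntactically, so the claim is immediate). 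This is precisely the step I expect to be the only real obstacle: over the standard semantics there is no sub-derivation for $\peq$ when $\pep$ succeeds, so one cannot infer that $\pefg_0(\peq)$ fails in $\pgg'$, and $\pefg_0(\pep\alt\peq)$ could then spuriously succeed there.
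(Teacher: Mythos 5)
Your proposal is correct and follows essentially the same route as the paper: appeal to Theorems~\ref{elim:wf:strict} and \ref{sem:uniq} to replace the standard parsing assertion by its strict counterpart (relying on repetition-freeness so all expressions are well-formed), then induct on the shape of the strict derivation tree, invoking disjointness exactly in the choice case where $\pep$ succeeds and $\appr{\pgg}{\pep}{-1}$ forces a failing sub-derivation for $\peq$. Your expanded case analysis, including the observation that the standard semantics would leave no sub-derivation for $\peq$, matches the paper's (much terser) argument.
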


\begin{proof}
We don't use the repetition operator, whence all expressions in well-formed
grammars are well-formed (this fact follows from an easy induction on the 
expression structure). 
By Theorems~\ref{elim:wf:strict} and \ref{sem:uniq}, 
$\sparse{\pgg}{\pee}{\rlt}{\sts}{\steo}$. 
The desired result follows by induction of the shape of the derivation tree of 
$\sparse{\pgg}{\pee}{\rlt}{\sts}{\steo}$, using the disjointness assumption 
in the choice case.
\end{proof}

As the result of this transformation, 
the sizes of the right-hand sides of the productions can grow 
exponentially though the number of productions stays unchanged. 
Preprocessing the 
grammar via introducing new non-terminals in such a way that all 
concatenations were applied to atoms 
(similarly to Ford~\cite{DBLP:conf/popl/Ford04}) would hinder the growth, 
but the size in the worst case remains
exponential. The subsequent transformations cause 
at most a linear growth of right-hand sides.

\subsection{Alignment elimination}\label{trans:align}

In a grammar $\pgg=(\ntsn,\tmst,\pefd,\pes)$ 
obtained via splitting, we can eliminate alignments using the 
following three steps:
\begin{enumerate}
\item Introduce a copy $\ntx'$ of each non-terminal $\ntx$ and define 
$\pefd(\ntx')=\aln{\pefd(\ntx)}$.
\item In all right-hand sides of productions and the start expression, 
apply distributivity laws (Theorem~\ref{sem:maineq} (3), 
Theorem~\ref{sem:absorb} (3), Theorem~\ref{sem:alnhelp}) and idempotence 
(Theorem~\ref{sem:absorb} (4)) 
to bring all alignment operators down to terminals and non-terminals.
Replace alignment of terminals by position (Theorem~\ref{sem:absorb} (6)).
\item In all right-hand sides of productions and the start expression, 
replace all subexpressions of the form $\aln{\ntx}$ with the 
corresponding new non-terminal $\ntx'$.
\end{enumerate}

For establishing the equivalence of the original and the obtained grammar, 
the following general theorem can be used.

\begin{theorem}\label{trans:align:gen}
Let $\pgg_1=(\ntsn,\tmst,\pefd_1,\pes_1)$ and 
$\pgg_2=(\ntsn,\tmst,\pefd_2,\pes_2)$ be \npeg s. If for every
$\ntx\in\ntsn$, $\pefd_1(\ntx)\sim_{\pgg_1}\pefd_2(\ntx)$ then 
$\parse{\pgg_2}{\pee}{\rlt}{\sts}{\steo}$ always implies 
$\parse{\pgg_1}{\pee}{\rlt}{\sts}{\steo}$.
\end{theorem}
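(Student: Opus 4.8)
The plan is to prove a stronger, more symmetric statement by induction and then specialise. Observe that the hypothesis $\pefd_1(\ntx)\sim_{\pgg_1}\pefd_2(\ntx)$ for all $\ntx$ is not symmetric in $\pgg_1$ and $\pgg_2$ — the equivalence is asserted inside $\pgg_1$ — so the two grammars are not interchangeable. The natural induction is on the shape of the derivation tree of $\parse{\pgg_2}{\pee}{\rlt}{\sts}{\steo}$; the claim to be proved by this induction is: for every $\pee\in\Expr{\pgg}$ (note both grammars share the same expression set, since $\ntsn$ and $\tmst$ coincide), every $\rlt\in\Rel^+(\Nat)$, every $\sts\in\State$ and every $\steo\in\State+1$, if $\parse{\pgg_2}{\pee}{\rlt}{\sts}{\steo}$ then $\parse{\pgg_1}{\pee}{\rlt}{\sts}{\steo}$.

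The induction proceeds by case analysis on the last rule applied in the $\pgg_2$-derivation, i.e.\ on the top construct of $\pee$. For the clauses (1), (2), (4)--(10) — empty string, terminal, concatenation, choice, negation, repetition, indentation, token position, alignment — the defining rules for $\parse{\pgg}{\cdot}{\cdot}{\cdot}{\cdot}$ do not mention $\pefd$ at all: they merely combine the parsing assertions of immediate subexpressions. Hence in each of these cases one rewrites the $\pgg_2$-premises for the subexpressions, applies the induction hypothesis to each (the subexpressions have strictly smaller derivation trees), obtains the corresponding $\pgg_1$-premises, and reassembles them by the very same rule to conclude $\parse{\pgg_1}{\pee}{\rlt}{\sts}{\steo}$. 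The only clause where $\pefd$ enters is clause (3), the non-terminal case: here $\pee=\ntx$ and the $\pgg_2$-derivation's premise is $\parse{\pgg_2}{\pefd_2(\ntx)}{\rlt}{\sts}{\steo}$. By the induction hypothesis applied to this strictly smaller derivation we get $\parse{\pgg_1}{\pefd_2(\ntx)}{\rlt}{\sts}{\steo}$. Now invoke the hypothesis $\pefd_1(\ntx)\sim_{\pgg_1}\pefd_2(\ntx)$: by definition of $\sim_{\pgg_1}$, this yields $\parse{\pgg_1}{\pefd_1(\ntx)}{\rlt}{\sts}{\steo}$, and then clause (3) read in $\pgg_1$ gives $\parse{\pgg_1}{\ntx}{\rlt}{\sts}{\steo}$, as required.

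The subtle point — and the one worth stating explicitly in the write-up — is the order of the two moves in the non-terminal case: one must first descend via the induction hypothesis from $\pgg_2$ to $\pgg_1$ on the body $\pefd_2(\ntx)$, and only then apply $\sim_{\pgg_1}$, which is an equivalence between $\pefd_1(\ntx)$ and $\pefd_2(\ntx)$ evaluated in $\pgg_1$; applying $\sim$ first would be meaningless since no relation between $\pefd_1(\ntx)$ and $\pefd_2(\ntx)$ is available in $\pgg_2$. I expect no real obstacle beyond this bookkeeping: the induction is on derivation-tree shape, so it is insensitive to divergence (diverging parses simply have no finite derivation tree and the implication is vacuous for them), and the fact that the two grammars have identical $\Expr{\pgg}$, identical terminal-consumption semantics, and identical rules for all compound constructs makes every case except (3) a mechanical "apply IH to each premise and reapply the rule" step. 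The statement of the theorem then follows by taking $\pee$ to range over arbitrary expressions; in particular it covers $\pes_2$ if one additionally wants the start expressions related, though as stated the theorem only concerns a fixed $\pee$ parsed in both grammars.
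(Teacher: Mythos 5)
Your proposal is correct and takes essentially the same route as the paper, which proves the theorem by exactly this induction on the shape of the derivation tree of $\parse{\pgg_2}{\pee}{\rlt}{\sts}{\steo}$; your treatment of the non-terminal case (first the induction hypothesis to pass from $\pgg_2$ to $\pgg_1$ on $\pefd_2(\ntx)$, then the equivalence $\pefd_1(\ntx)\sim_{\pgg_1}\pefd_2(\ntx)$, then clause~3 in $\pgg_1$) is precisely the step the paper leaves implicit in its one-line proof. The only point worth a footnote is the token-mode bookkeeping: since $\sim_{\pgg_1}$ is defined only for modes in $\Rel^+(\Nat)$ while the position construct can in principle switch to an arbitrary relation, the induction statement should either quantify over all modes reachable in the derivation or note that the grammars in use restrict position relations to $\Rel^+(\Nat)$ --- a technicality the paper itself glosses over.
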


\begin{proof}
Easy induction on the shape of the derivation tree of
$\parse{\pgg_2}{\pee}{\rlt}{\sts}{\steo}$.
\end{proof}

Denote by $\peff$ the function defined on $\Expr{\pgg}$ that performs
transformations of steps 2--3, i.e., distributes alignment operators to the
non-terminals and replaces aligned non-terminals with corresponding new
non-terminals. Denote by $\pgg'$ the grammar obtained after step~3. 
Note that step~1 does not change the semantics of expressions written in
the original grammar. Steps 2 and 3 replace the right-hand sides of
productions with expressions that are semantically equivalent with them in
the grammar obtained after step~1. By Theorem~\ref{trans:align:gen}, 
this implies that whenever parsing of some $\pee\in\Expr{\pgg}$ in
the final grammar $\pgg'$ produces some result then the same result is 
obtained when parsing $\pee$ with the same input state and token mode 
in the original grammar~$\pgg$. In order to be able to apply 
Theorem~\ref{trans:align:gen} with grammars interchanged, 
we need the equivalence of the right-hand sides of productions also in 
grammar $\pgg'$. For this, it is sufficient to show 
$\aln{\ntx}\sim_{\pgg'}\ntx'$ for every $\ntx\in\ntsn$, which in turn
would follow from the statement 
$\aln{\peff(\pefd(\ntx))}\sim_{\pgg'}\peff(\aln{\pefd(\ntx)})$. Consequently, 
the equivalence of the initial and final grammars is implied by 
the following theorem.

\begin{theorem}\label{trans:align:thm}
For every $\pee\in\Expr{\pgg}$,
$\peff(\aln{\pee})\sim_{\pgg'}\aln{\peff(\pee)}$. 
\end{theorem}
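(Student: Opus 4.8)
The plan is to prove the statement by structural induction on $\pee$, following exactly the recursive structure of the function $\peff$. The base cases are the atoms: for $\pee=\e$ we have $\peff(\aln\e)=\e$ and $\aln{\peff(\e)}=\aln\e$, and these are equivalent by Theorem~\ref{sem:absorb}~(6) read for $\e$ — or more directly by Theorem~\ref{sem:maineq}~(3); for $\pee=\tma$ a terminal, $\peff(\aln\tma)=\loc{\idrel}{\tma}$ while $\aln{\peff(\tma)}=\aln\tma$, equivalent by Theorem~\ref{sem:absorb}~(6); for $\pee=\ntx$ a non-terminal, $\peff(\aln\ntx)=\ntx'$ and $\aln{\peff(\ntx)}=\aln\ntx$, and here we need $\aln\ntx\sim_{\pgg'}\ntx'$, which holds in $\pgg'$ because $\pefd'(\ntx')=\aln{\pefd(\ntx)}$ was set in step~1 and then steps~2--3 rewrote it to $\peff(\aln{\pefd(\ntx)})$; unfolding $\ntx'$ and $\aln\ntx$ and using the induction hypothesis on $\pefd(\ntx)$ closes this case — so this base case is mutually intertwined with the whole theorem and must be handled together, i.e.\ the induction is really on expression structure with the non-terminal case appealing to the already-available equivalence on $\pefd(\ntx)$.

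For the inductive cases I would unfold $\peff$ and push the outer $\aln{\cdot}$ inside using the distributivity laws, then apply the induction hypothesis to the immediate subexpressions. Concretely: for choice, $\peff(\aln{\pep\alt\peq}) = \peff(\aln\pep\alt\aln\peq)$ by Theorem~\ref{sem:maineq}~(3) applied first, $= \peff(\aln\pep)\alt\peff(\aln\peq)$ since $\peff$ commutes with $\alt$; by IH this is $\sim_{\pgg'} \aln{\peff(\pep)}\alt\aln{\peff(\peq)} \sim_{\pgg'} \aln{\peff(\pep)\alt\peff(\peq)} = \aln{\peff(\pep\alt\peq)}$, again by Theorem~\ref{sem:maineq}~(3). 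The negation, indentation, position, and (already-eliminated, but formally present) repetition cases are analogous, using Theorem~\ref{sem:maineq}~(3) ($\aln{\lkh\pep}\sim\lkh{\aln\pep}$), Theorem~\ref{sem:absorb}~(3) ($\ind\rlr{\aln\pep}\sim\aln{\ind\rlr\pep}$), and Theorem~\ref{sem:maineq}~(3) ($\aln{\loc\rls\pep}\sim\loc\rls{\aln\pep}$) respectively, plus the fact that $\peff$ commutes with each of these constructs. The alignment case $\pee=\aln\per$ uses idempotence, Theorem~\ref{sem:absorb}~(4): $\peff(\aln{\aln\per})$; here $\peff$ first applies idempotence to collapse $\aln{\aln\per}$ to $\aln\per$ inside, so $\peff(\aln{\aln\per})=\peff(\aln\per)$, while $\aln{\peff(\aln\per)}\sim_{\pgg'}\aln{\aln{\peff(\per)}}\sim_{\pgg'}\aln{\peff(\per)}\sim_{\pgg'}\peff(\aln\per)$ by IH and idempotence.

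The genuinely delicate case — and the main obstacle — is concatenation $\pee=\pep\peq$, because alignment does not distribute over concatenation unconditionally. Here the definition of $\peff$ must perform a case split governed by the approximation semantics of $\pep$ (whether $\appr{\pgg}{\pep}{0}$, $\appr{\pgg}{\pep}{1}$, or both can hold): if parsing of $\pep$ can only consume input, Theorem~\ref{sem:alnhelp}~(2) gives $\aln{\pep\peq}\sim\aln\pep\,\peq$; if it can only succeed without consuming, Theorem~\ref{sem:alnhelp}~(1) gives $\aln{\pep\peq}\sim\aln\pep\,\aln\peq$; and after splitting (Subsect.~\ref{trans:split}) every concatenand is separated into its consuming and non-consuming variants so exactly one of these applies to each term. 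I would therefore (i) recall that in a split grammar each subexpression has a determined behaviour class, so that $\peff$ on a concatenation is well-defined by the corresponding one of the two Theorem~\ref{sem:alnhelp} rewrites, (ii) verify that the approximation-semantics side conditions used by $\peff$ are preserved by $\peff$ itself (so the rewrites remain applicable in $\pgg'$), and (iii) apply the IH to $\pep$ and to $\peq$ (or $\aln\peq$) and reassemble. The subtlety is purely bookkeeping about which branch of $\peff$ fires and checking its hypothesis survives in $\pgg'$; once that is pinned down, each branch reduces to a one-line chain of equivalences from Theorems~\ref{sem:maineq}, \ref{sem:absorb} and \ref{sem:alnhelp} together with the induction hypothesis.
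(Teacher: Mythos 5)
There is a genuine gap, and it sits exactly where you flag it: the non-terminal case. You need $\aln{\ntx}\sim_{\pgg'}\ntx'$, and your plan is to unfold both sides to statements about $\pefd(\ntx)$ and invoke the ``induction hypothesis'' there. But $\pefd(\ntx)$ is not a structural subexpression of $\ntx$; it may be larger and, in the interesting cases, contains $\ntx$ (or other non-terminals that loop back to it), so a structural induction on $\pee$ has no decreasing measure at this step and your appeal to an ``already-available equivalence on $\pefd(\ntx)$'' is circular. Saying the case is ``mutually intertwined with the whole theorem'' names the problem but does not solve it. The paper avoids this entirely: it does not induct on expressions at all, but proves two implications between parsing assertions (from $\parse{\pgg'}{\peff(\aln{\pee})}{\rlt}{\sts}{\steo}$ to $\parse{\pgg'}{\aln{\peff(\pee)}}{\rlt}{\sts}{\steo}$ and back) by induction on the shape of the derivation trees of those assertions; unfolding a non-terminal strictly shrinks the derivation tree, so recursion in $\pefd$ is harmless. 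The paper also strengthens the induction statement with a doubly-aligned assertion ($\baln{\peff(\aln{\pee})}$), needed precisely when $\pee$ is itself of the form $\aln{\pep}$ — the analogue of your idempotence case, but carried as part of the induction rather than discharged by rewriting.

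A secondary, smaller issue is the concatenation case. Your step (ii) — checking that the ``always consumes'' / ``never consumes'' hypotheses of Theorem~\ref{sem:alnhelp} still hold for the $\peff$-transformed pieces \emph{in the grammar $\pgg'$} — is not mere bookkeeping: those hypotheses are semantic, quantified over all states and token modes, and the splitting invariant (Theorem~\ref{trans:split:inv}) gives them for the split grammar, not automatically for the grammar obtained after alignment distribution and the $\aln{\ntx}\mapsto\ntx'$ replacement. You would have to prove a preservation lemma for this, which again is most naturally done by induction on parsing derivations — i.e., you end up rebuilding the paper's argument. So the algebraic-rewriting skeleton you propose is a reasonable way to motivate the definition of $\peff$, but as a proof it needs to be replaced (or underpinned) by an induction on derivation trees with a suitably strengthened statement.
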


\begin{proof}
The claim is a direct consequence of the following two lemmas, both holding 
for arbitrary $\sts\in\State$, $\steo\in\State+1$:
\begin{itemize}
\item If $\parse{\pgg'}{\peff(\aln{\pee})}{\rlt}{\sts}{\steo}$ then
$\parse{\pgg'}{\aln{\peff(\pee)}}{\rlt}{\sts}{\steo}$ and
$\parse{\pgg'}{\baln{\peff(\aln{\pee})}}{\rlt}{\sts}{\steo}$;
\item If $\parse{\pgg'}{\aln{\peff(\pee)}}{\rlt}{\sts}{\steo}$ or
$\parse{\pgg'}{\baln{\peff(\aln{\pee})}}{\rlt}{\sts}{\steo}$ then 
$\parse{\pgg'}{\peff(\aln{\pee})}{\rlt}{\sts}{\steo}$.
\end{itemize}
Both lemmas are proven by induction on the shape of derivation trees.
The assertion with two alignments (both outside and inside) is needed in the
case where $\pee$ itself is of the form~$\aln{\pep}$.
\end{proof}

\subsection{Elimination of position operators}\label{trans:loc}

In an alignment-free \npeg\ $\pgg=(\ntsn,\tmst,\pefd,\pes)$, 
we can get rid of position operations using a process largely 
analogous to the alignment elimination, consisting of the following 
four steps:
\begin{enumerate}
\item Introduce a new non-terminal $\nnter{\ntx,\rlt}$ 
for each existing non-terminal $\ntx$ and relation $\rlt$ used by a position
operator, with 
$\pefd(\nnter{\ntx,\rlt})=\loc{\rlt}{(\pefd(\ntx))}$.
\item Apply distributivity laws (Theorem \ref{sem:maineq} (1)) and
cancellation (Theorem \ref{sem:absorb} (5)) to bring all position operators
down to terminals and non-terminals.
\item Replace all subexpressions of the form $\loc{\rlt}{\ntx}$ with
corresponding new non-terminals $\nnter{\ntx,\rlt}$.
\item Replace all subexpressions of the form $\loc{\rlt}{\tma}$ with
$\ind{\rlt}{\tma}$.
\end{enumerate}

Again, denote by $\peff$ the function defined on $\Expr{\pgg}$ that performs 
transformations of steps 2--3, i.e., distributes position operators to the 
terminals and non-terminals and replaces non-terminals under position
operators with 
corresponding new non-terminals. Denote by $\pgg'$ the grammar obtained after 
step 3. 
Theorem~\ref{trans:align:gen} applies here as well, whence the
equivalence of the grammar obtained after step 3 and the initial 
grammar is implied by the following Theorem~\ref{elim:loc:thm}.

\begin{theorem}\label{elim:loc:thm}
For 
every $\pee\in\Expr{\pgg}$ and $\rls\in\Rel^+(\Nat)$,
$\peff(\loc{\rls}{\pee})\sim_{\pgg'}\loc{\rls}{(\peff(\pee))}$.
\end{theorem}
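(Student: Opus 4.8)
The plan is to mirror the structure of the alignment-elimination argument (Theorem~\ref{trans:align:thm}), but the situation here is genuinely simpler because the position operator behaves much better than alignment: by Theorem~\ref{sem:maineq}~(1) it distributes through \emph{every} PEG construct — including concatenation — and through indentation and alignment, and by Theorem~\ref{sem:absorb}~(5) nested position operators collapse. So the function $\peff$ applied to $\loc{\rls}{\pee}$ just pushes $\loc{\rls}{\cdot}$ all the way to the leaves. First I would record the recursive behaviour of $\peff$ on a position-prefixed expression: $\peff(\loc{\rls}{(\pep\peq)})=\peff(\loc{\rls}{\pep})\,\peff(\loc{\rls}{\peq})$, similarly for choice, and $\peff(\loc{\rls}{(\lkh\pep)})=\lkh{\peff(\loc{\rls}{\pep})}$, $\peff(\loc{\rls}{(\ind{\rlr}{\pep})})=\ind{\rlr}{(\peff(\loc{\rls}{\pep}))}$, $\peff(\loc{\rls}{(\aln{\pep})})=\aln{(\peff(\loc{\rls}{\pep}))}$, while on the innermost position layer $\peff(\loc{\rls}{(\loc{\rlt}{\pep})})=\peff(\loc{\rlt}{\pep})$ (outer mode cancelled), and at the leaves $\peff(\loc{\rls}{\e})=\e$, $\peff(\loc{\rls}{\tma})=\loc{\rls}{\tma}$ (left for step~4), $\peff(\loc{\rls}{\ntx})=\nnter{\ntx,\rls}$.

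The main argument is then an induction on the structure of $\pee$, proving $\peff(\loc{\rls}{\pee})\sim_{\pgg'}\loc{\rls}{(\peff(\pee))}$ for all $\rls\in\Rel^+(\Nat)$ simultaneously. In each inductive case I rewrite $\loc{\rls}{(\peff(\pee))}$ using the distributivity law of Theorem~\ref{sem:maineq}~(1) \emph{in the grammar $\pgg'$} (these equivalences hold in every \npeg, in particular $\pgg'$), so that $\loc{\rls}{\cdot}$ moves inside, and then match it termwise against $\peff(\loc{\rls}{\pee})$ using the induction hypotheses on the immediate subexpressions; semantic equivalence is a congruence with respect to all the constructs, so termwise equivalence of parts yields equivalence of the whole. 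The two delicate leaf-style cases are: the nested position case $\pee=\loc{\rlt}{\pep}$, where I need $\peff(\loc{\rls}{(\loc{\rlt}{\pep})})=\peff(\loc{\rlt}{\pep})\sim_{\pgg'}\loc{\rls}{(\peff(\loc{\rlt}{\pep}))}$ — here the outer $\loc{\rls}{\cdot}$ on the right is absorbed by Theorem~\ref{sem:absorb}~(5) (applied to $\peff(\loc{\rlt}{\pep})$, which, by the shape of $\peff$, is a $\loc{\rlt}{\cdot}$-headed expression or has all its position/indentation information already pushed to leaves), matching the left side; and the non-terminal case $\pee=\ntx$, where the right side is $\loc{\rls}{\ntx}$ and the left side is the new non-terminal $\nnter{\ntx,\rls}$ with $\pefd'(\nnter{\ntx,\rls})=\peff(\loc{\rls}{(\pefd(\ntx))})$, so I must show $\nnter{\ntx,\rls}\sim_{\pgg'}\loc{\rls}{\ntx}$, i.e.\ $\peff(\loc{\rls}{(\pefd(\ntx))})\sim_{\pgg'}\loc{\rls}{(\pefd(\ntx))}$ after expanding $\ntx$ — but note $\pefd(\ntx)$ is position-free in the relevant sense only after $\peff$, so really I need $\peff(\loc{\rls}{(\pefd(\ntx))})\sim_{\pgg'}\loc{\rls}{(\peff(\pefd(\ntx)))}$, which is exactly the statement of the theorem for the subexpression $\pefd(\ntx)$.

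The non-terminal case is therefore the crux and, as in Theorem~\ref{trans:align:gen}'s use elsewhere, it forces the induction to be set up not purely on expression structure but to be closed under unfolding non-terminals; I would handle this exactly as the paper handles the analogous point for alignment, namely by proving the two implications between $\peff(\loc{\rls}{\pee})$ and $\loc{\rls}{(\peff(\pee))}$ by induction on the shape of derivation trees of parsing assertions in $\pgg'$, rather than on expression structure, so that the non-terminal rule (clause~3 of the semantics) is just another derivation-tree case and the definition $\pefd'(\nnter{\ntx,\rls})=\peff(\loc{\rls}{(\pefd(\ntx)))}$ feeds the induction hypothesis directly. The hard part will be bookkeeping in the nested-position case, making sure that the repeated application of $\peff$ combined with the cancellation law Theorem~\ref{sem:absorb}~(5) lines up the two sides syntactically; everything else is a routine congruence-plus-distributivity computation, with no analogue of the concatenation obstruction that complicated alignment elimination.
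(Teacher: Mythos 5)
Your final plan---prove the two directional implications between $\peff(\loc{\rls}{\pee})$ and $\loc{\rls}{(\peff(\pee))}$ by induction on the shape of derivation trees of parsing assertions in $\pgg'$, so that non-terminal unfolding is just another case---is essentially the paper's method. However, there is a genuine gap in how you dispose of the nested-position case, and it is exactly the case the induction cannot survive as you have set it up. For $\pee=\loc{\rlt}{\pep}$ you must relate $\peff(\loc{\rls}{\pee})=\peff(\loc{\rlt}{\pep})$ to $\loc{\rls}{(\peff(\loc{\rlt}{\pep}))}$; by clause 9 of the semantics this amounts to showing that $\peff(\loc{\rlt}{\pep})$ parses to the same result under the ambient token mode $\rls$ as under the original one, i.e.\ that the $\peff$-image is token-mode independent. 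Theorem~\ref{sem:absorb}~(\ref{sem:absorb:loc}) does not give you this: it cancels an outer position operator only over an expression that is \emph{syntactically} headed by a position operator, and $\peff(\loc{\rlt}{\pep})$ is in general not of that shape, since $\peff$ pushes the positions to the leaves (for $\pep=\ntx$ it is the bare non-terminal $\nnter{\ntx,\rlt}$, for a concatenation it is a concatenation of $\peff$-images, etc.). Your two implications, as stated, provide no induction hypothesis strong enough to conclude mode independence, so the derivation-tree induction stalls precisely here; the same defect already undermines the structural-induction sketch in your second paragraph.

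The missing idea is to strengthen the induction statement: carry along the doubly wrapped assertion $\loc{\rlu}{(\peff(\loc{\rls}{\pee}))}$ for an arbitrary $\rlu\in\Rel^+(\Nat)$---as an additional conclusion in the forward lemma and as an additional alternative hypothesis in the backward lemma. That strengthened assertion is exactly the token-mode-independence fact needed in the nested-position case, and it mirrors the $\baln{\peff(\aln{\pee})}$ strengthening in the alignment proof you invoke; the paper's proof of Theorem~\ref{elim:loc:thm} does precisely this and remarks that the version with the position operator both outside and inside is needed when $\pee$ is itself an application of the position operator. With that strengthening added, the rest of your proposal (the recursion equations for $\peff$ on position-prefixed expressions, the leaf cases $\peff(\loc{\rls}{\tma})=\loc{\rls}{\tma}$ and $\peff(\loc{\rls}{\ntx})=\nnter{\ntx,\rls}$, and the congruence/distributivity bookkeeping via Theorem~\ref{sem:maineq}~(\ref{sem:maineq:loc})) goes through as you describe.
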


\begin{proof}
The claim is a direct consequence of the following two lemmas, both holding 
for arbitrary $\sts\in\State$, $\steo\in\State+1$ and 
$\rlt,\rlu\in\Rel^+(\Nat)$:
\begin{itemize}
\item If
$\parse{\pgg'}{\peff(\loc{\rls}{\pee})}{\rlt}{\sts}{\steo}$ then
$\parse{\pgg'}{\loc{\rls}{(\peff(\pee))}}{\rlt}{\sts}{\steo}$ and 
$\parse{\pgg'}{\loc{\rlu}{(\peff(\loc{\rls}{\pee}))}}{\rlt}{\sts}{\steo}$;
\item If 
$\parse{\pgg'}{\loc{\rls}{(\peff(\pee))}}{\rlt}{\sts}{\steo}$ or 
$\parse{\pgg'}{\loc{\rlu}{(\peff(\loc{\rls}{\pee}))}}{\rlt}{\sts}{\steo}$ then
$\parse{\pgg'}{\peff(\loc{\rls}{\pee})}{\rlt}{\sts}{\steo}$.
\end{itemize}
Both lemmas are proven by induction on the shape of the derivation trees. 
The claim with position operator both outside and inside 
($\loc{\rlu}{(\peff(\loc{\rls}{\pee}))}$) is needed in the case when $\pee$ 
itself is an application of the position operator.
\end{proof}

Correctness of step 4 can be proven by induction on the shape of the 
derivation trees, using Theorem \ref{sem:locind}. Note that 
here we must assume that parsing according to the final grammar is
performed with the alignment flag false (a natural assumption as the grammar
is alignment-free) and the token mode $\idrel$.

\subsection{Discussion on the preconditions}\label{trans:prelim}

Alignment elimination was correctly defined under the assumption that the
input grammar is well-formed, has negations only in front of atoms, and 
disjoint choices (all these conditions are needed at stage 1 only). 
The second assumption can be easily established by introducing a new
non-terminal for each expression $\pep$ such that $\lkh{\pep}$ occurs in
the productions or in the start expression. This can be done in the lines of
the first stage of the negation elimination process described by 
Ford \cite{DBLP:conf/popl/Ford04}. This transformation preserves
well-formedness of the grammar. Achieving disjoint choices is a more
subtle topic. A straightforward way would be replacing choices of the form
$\pep\alt\peq$ with disjoint choices $\pep\alt\lkh{\pep}\peq$ which seems to
work well as $\pep\alt\peq$ and $\pep\alt\lkh{\pep}\peq$ are equivalent in
the standard semantics. 

Alas, $\pep\alt\peq$ and $\pep\alt\lkh{\pep}\peq$ are not equivalent in the
approximation semantics, because if $\appr{\pgg}{\pep}{1}$, 
$\appr{\pgg}{\pep}{-1}$, $\appr{\pgg}{\peq}{0}$ but 
$\nappr{\pgg}{\peq}{-1}$, then
$\appr{\pgg}{\pep\alt\lkh{\pep}\peq}{-1}$ while
$\nappr{\pgg}{\pep\alt\peq}{-1}$. Due to this, replacing $\pep\alt\peq$
with $\pep\alt\lkh{\pep}\peq$ can break well-formedness. Take
$\ntx\in\ntsn$ such that $\pefd(\ntx)=\lkh{(\tma\alt\e)}\ntx$. Then
$\ntx\in\WF_{\pgg}$ due to $\lkh{(\tma\alt\e)}\in\WF_{\pgg}$ alone, no 
recursive call to $\ntx\in\WF_{\pgg}$ arises as
$\nappr{\pgg}{\lkh{(\tma\alt\e)}}{0}$. However, if
$\pefd'(\ntx)=\lkh{(\tma\alt\lkh{\tma}\e)}\ntx$ in $\pgg'$ then 
$\appr{\pgg'}{\lkh{(\tma\alt\lkh{\tma}{\e})}}{0}$ whence well-formedness of
$\ntx$ now recursively requires well-formedness of $\ntx$. Thus
$\ntx\nin\WF_{\pgg'}$. (An argument similar to this shows that the
first stage of the negation elimination process in 
Ford \cite{DBLP:conf/popl/Ford04} also can break well-formedness. As the
second stage is correctly defined only for well-formed grammars, the whole
process fails.)

One solution would be changing the approximation semantics by adding, to the
inductive definition in Subsect.~\ref{trans:wf}, a
general clause
\begin{enumerate}\setcounter{enumi}{-1}
\item $\appr{\pgg}{\pee}{-1}$ if $\appr{\pgg}{\pee}{0}$ or
$\appr{\pgg}{\pee}{1}$.
\end{enumerate}
This forces $\appr{\pgg}{\pee}{-1}$ to hold whenever an assertion of the
form $\appr{\pgg}{\pee}{n}$ holds, and in particular, $\pep\alt\peq$ becomes
equivalent to $\pep\alt\lkh{\pep}\peq$. Then replacing $\pep\alt\peq$ with 
$\pep\alt\lkh{\pep}\peq$ preserves well-formedness. Although well-formedness
predicate becomes more restrictive and rejects more safe grammars, 
the loss seems to be little and acceptable in practice (expressions
$\peq$ such that $\appr{\pgg}{\peq}{0}$ or $\appr{\pgg}{\peq}{1}$ while
$\nappr{\pgg}{\peq}{-1}$ seem to occur not very commonly in influenced 
productions such as $\ntx\mapsto\lkh{(\pep\alt\peq)}\ntx$, but a further 
investigation is needed to clarify this).

\section{Related work}\label{other}

PEGs were first introduced and studied by 
Ford~\cite{DBLP:conf/popl/Ford04} who also showed them to be closely related 
with the TS system \cite{DBLP:journals/iandc/BirmanU73} and TDPL 
\cite{Aho:1972:TPT:578789}, as well as to their generalized forms
\cite{DBLP:journals/iandc/BirmanU73,Aho:1972:TPT:578789}.

Adams \cite{DBLP:conf/popl/Adams13} and Adams and A\uu{g}acan
\cite{DBLP:conf/haskell/AdamsA14} provide an excellent overview of 
previous approaches to describing indentation-sensitive languages and 
attempts of building indentation features into parser libraries. 
Our work is a theoretical study of the approach proposed in 
\cite{DBLP:conf/haskell/AdamsA14} while some details of the semantics used 
in our paper were \qquot{corrected} in the lines of Adams' 
\texttt{indentation} package for Haskell \cite{implementation}. 
This package enables specifying indentation 
sensitivity 
within the Parsec and Trifecta parser combinator libraries. 
A process of alignment operator elimination is previously described for CFGs
by Adams \cite{DBLP:conf/popl/Adams13}.

Matsumura and Kuramitsu~\cite{DBLP:journals/jip/MatsumuraK16} develop a very
general extension of PEG that also enables to specify indentation. Their 
framework is powerful but complicated. The approach proposed in
\cite{DBLP:conf/haskell/AdamsA14} and followed by us is in contrast with 
\cite{DBLP:journals/jip/MatsumuraK16} by focusing on indentation and aiming
to maximal simplicity and convenience of usage.

\section{Conclusion}\label{conc}

We studied the extension of PEG proposed by Adams and
A\uu{g}acan~\cite{DBLP:conf/haskell/AdamsA14} for indentation-sensitive
parsing. This extension uses operators for marking indentation and alignment 
besides the classic ones. Having added one more operator (position) for 
convenience,
we found a lot of useful semantic equivalences that are valid on expressions
written in the extended grammars. We applied these equivalences subsequently
for defining a process that algorithmically eliminates all alignment and
position operators from well-formed grammars.

\bibliography{nestra}

\end{document}